\newcommand{\F}{\mathbb{F}}
\newcommand{\wt}{\mbox{wt}}
\newcommand{\bom}[1]{\mbox{\boldmath $#1$}}
\begin{document}

\title{Characteristic vector and weight distribution of a linear code%\thanks{Grants or other notes
%about the article that should go on the front page should be
%placed here. General acknowledgments should be placed at the end of the article.}
}
%\subtitle{Do you have a subtitle?\\ If so, write it here}

%\titlerunning{Short form of title}        % if too long for running head

\author{Iliya Bouyukliev    \and
        Stefka Bouyuklieva \and
        Tatsuya Maruta \and
        Paskal Piperkov
        }

%\authorrunning{Short form of author list} % if too long for running head

\institute{Iliya Bouyukliev \at
              Institute of Mathematics and Informatics, Bulgarian Academy of Sciences, P.O.
    Box 323, Veliko Tarnovo, Bulgaria,
              \email{iliyab@math.bas.bg}           %  \\
%             \emph{Present address:} of F. Author  %  if needed
           \and
           Stefka Bouyuklieva \at
           Faculty of Mathematics and Informatics, St. Cyril and St. Methodius University of Veliko Tarnovo, Bulgaria,
           \email{stefka@ts.uni-vt.bg}
           \and
        Tatsuya Maruta \at
        Department of Mathematical Sciences, Osaka Prefecture University,
Sakai, Osaka 599-8531, Japan,
\email{maruta@mi.s.osakafu-u.ac.jp}
        \and
        Paskal Piperkov \at
              Institute of Mathematics and Informatics, Bulgarian Academy of Sciences, P.O.
    Box 323, Veliko Tarnovo, Bulgaria,
              \email{ppiperkov@math.bas.bg}
}

\date{Received: date / Accepted: date}
% The correct dates will be entered by the editor

\maketitle

\begin{abstract}
An algorithm for computing the weight distribution of a  linear $[n,k]$ code over a finite field $\mathbb{F}_q$ is developed. The codes are represented by their characteristic vector with respect to a given generator matrix and a generator matrix of the $k$-dimensional simplex code $\mathcal{S}_{q,k}$.
\keywords{Linear code \and Weight distribution \and Walsh transform}
% \PACS{PACS code1 \and PACS code2 \and more}
\subclass{94B05 \and 15B36 \and 11Y16}
\end{abstract}

\section{Introduction}
\label{intro}

Many problems in coding theory require efficient computing of the weight distribution of a given linear code.
Some sufficient conditions for a linear code to be good or proper for error detection are expressed in terms of the weight distribution \cite{DD}. The weight distribution of the hull of a code provides a signature and the same signature computed for any permutation-equivalent code will allow the reconstruction of the permutation
\cite{Sendrier}.
The weight distributions of codes can be used to compute some characteristics of the boolean and vectorial boolean functions \cite{Carlet1}.

In 1978, Berlekamp, McEliece, and van Tilborg \cite{BMCT} proved
that two fundamental problems in coding theory, namely
maximum-likelihood decoding and computation of the
weight distribution, are NP-hard for
the class of binary linear codes.
The formal statement of the
corresponding to the weight distribution decision problem is \cite{Vardy}

\begin{tabular}{ll}
Problem:& WEIGHT DISTRIBUTION\\
Instance& A binary $m\times n$ matrix $H$ and an integer $w>0$.\\
Question:& Is there a vector $x\in\F_2^n$ of weight $w$, such that $Hx^T=0$?
\end{tabular}

Berlekamp, McEliece, and van Tilborg \cite{BMCT} proved that this problem is NP-complete.
Nevertheless, many algorithms for calculating the weight distribution have been developed. Some of them are implemented
in the software systems related to Coding theory, such as \textsc{MAGMA}, \textsc{GUAVA}, \textsc{Q-Extension}, etc.
\cite{GUAVA,MAGMA,Q-EXT}. The main idea in the common algorithms
%for finding the weight distribution of linear codes
is to obtain all linear combinations of the basis vectors and to calculate their weights. The efficient algorithms generate all codewords in a sequence, where any codeword is obtained from the
previous one by adding only one codeword. They usually use $q$-ary Gray codes (see for example \cite{GBS}) or an additional matrix (see \cite{BouyuklievBakoev}). The complexity of these algorithms is $O(nq^{k})$ for a fixed $q$. Katsman and Tsfasman in \cite{Katsman} proposed a geometric method based on algebraic-geometric codes. Some methods use matroids and Tutte polynomials, geometric lattices \cite{JP}, or Gr\"{o}bner bases \cite{Borges-Quintana2008,GS,Moro2016,Sala2007}. The algorithm in \cite{Borges-Quintana2015} is based on the idea of an ideal associated to a binary code, and its main aim is to compute the set of coset leaders, but the algorithms in that paper can be easily reformulated to compute the weight distribution of codes over different finite fields.
Bellini and Sala in \cite{BelliniSala} provided a deterministic algorithm to compute the weight and distance distribution of a binary nonlinear code, which takes advantage of fast Fourier techniques. The binary code in \cite{BelliniSala} is represented as a set of Boolean functions in numerical normal form (NNF). Efficient calculation of the weight distribution for linear codes over large finite fields is given in \cite{Han-Seo-Ju}.

%Indeed, since Gr\"{o}bner bases are involved, the
%complexity is $O(n^2 2^{n-k})$ in the binary case.
%Another approach is to consider all linear codes as a generalization of cyclic codes and use some well known ideas of Cooper for cyclic codes (see \cite{GS}).

We propose an algorithm for computing the weight distribution of a linear code based on a generalized Walsh-Hadamard transform. The linear codes here are represented by their characteristic vector $\chi$. We obtain a vector whose coordinates are all non-zero weights in the code, by multiplying a special (recursively constructed) integer matrix by $\chi^{\rm T}$. The complexity for this multiplication is $O(kq^{k}$), where $k$ is the dimension of the considered code.

%The multiplication can be realized by a butterfly algorithm which is very fast in a parallel realization. The proposed algorithm is effective especially for codes with large lengths.

 In the binary case,
 %our approach is related to the Walsh-Hadamard transform \cite{ElliottRao}, and so one can
 we compute the weight distribution by using algorithms for fast Walsh transform which are easy for implementation. For codes over prime field with $p>2$ elements we use an integer matrix of size $\theta(p,k)\times\theta(p,k)$ where $\theta(p,k)=\frac{p^k-1}{p-1}$. The weight distribution in this case can also be obtained by applying the generalized Walsh transform but then one has to use a $p^k\times p^k$ matrix \cite{Karpovsky79}. For codes over a composite field with $q=p^m$ elements, $m>1$, we use the trace map and take their images over the prime field $\mathbb{F}_p$.

 The considered algorithms  are related to butterfly networks and diagrams. A detailed description of the binary  butterfly network is presented in \cite{Parhami2002}.
These types of algorithms have very efficient natural implementations  with SIMD model of parallelization especially with the CUDA platform \cite{ButterflySorting}. The speedup of
the parallel implementation of Walsh-Hadamard transform in GPU  can be seen in \cite{Bikov_Bouyukliev}.

 We implemented the presented algorithm in a C/C++ program without special optimizations. Input data were randomly generated linear codes with lengths 300, 3000, 30000 and different dimensions over finite fields with 2, 3, 4, 5,  and 7 elements. The results of our experiments show that the  presented approach is very efficient for codes with large length and small rate.
  Such codes are useful to distinguish vectorial boolean functions up to equivalence (for different types of equivalences)
 \cite{CCZ,EdelPott}.

 %For example, calculating the weight distribution of codes with length 30000 with the presented algorithm is between 4 and 80 times faster (depending on the field) than the same calculation with Magma software system.

 %The algorithms that use listings of codewords, are faster for small lengths because they use different types of CPU optimizations (bitwise representation, AVX  instructions).

 %The algorithms that use listings of codewords, are faster for small lengths because they use bitwise representation of the codewords and bitwise operations. For larger lengths, the algorithms close to discrete Fourier transforms, can be represented by a butterfly diagram and are therefore suitable for parallelization \cite{Bikov_Bouyukliev}.

 The paper is organized as follows.
In Section \ref{sec:preliminaries} we define the basic concepts and prove important assertions that we use in the paper.
%For our purpose, we use a specially chosen generator matrix of the $k$-dimensional $q$-ary simplex code.
 In Section \ref{sec:binary} we present a butterfly algorithm for computing the weight distribution of a binary linear code. Section \ref{sec:prime} is devoted to linear codes over a prime field with $p>2$ elements. In Section \ref{sec:composite} we prove some theorems that give the connection of the weight distribution of a linear $[n,k]$ code over a composite finite field with characteristic $p$ and $q=p^m$ elements, with the weight distribution of a linear $[(q-1)n,mk]$ code over  $\F_p$.
 Section \ref{sec_res} presents the complexity of the algorithms and some experimental results.

\section{Preliminaries}
\label{sec:preliminaries}

Let $\F_q=\{0,\alpha_1=1,\alpha_2,\dots,\alpha_{q-1}\}$ be a field with $q$ elements, and $\F_q^n$ be the  $n$-dimensional vector space over $\F_q$. Every $k$-dimensional subspace $C$ of $\F_q^n$ is called a $q$-ary \textit{linear $[n,k]$
code} (or an $[n,k]_q$ code). The parameters $n$ and $k$ are called \textit{length} and
\textit{dimension} of $C$, respectively, and the vectors in $C$ are called \textit
{codewords}. The \textit{(Hamming) weight} $\wt(v)$ of a vector $v\in\F_q^n$
is the number of its non-zero coordinates. The smallest weight of a non-zero codeword is called the \textit{minimum weight} of the code. If $A_i$ is the number of
codewords of weight $i$ in $C$, $i=0,1,\dots, n$, then the sequence $(A_0, A_1,
\dots, A_n)$ is called the \textit{weight distribution} of $C$, and the polynomial $W_C(y)=\sum_{i=0}^n A_iy^i$ is the \textit{weight enumerator} of the code. Obviously, for any linear code $A_0=1$ and $A_i=0$ for $i=1,\dots,d-1$, where $d$ is the minimum weight.
%The \textit{minimum weight} $d$ of $C$ is the smallest weight among all its non-zero codewords, and so $C$ is said to be a \textit{$[n,k,d;q]$ linear code}.

Any $k\times n$ matrix $G$, whose rows form a basis of $C$, is called a \textit{generator matrix}
of the code.
The $q$-ary simplex code $\mathcal{S}_{q,k}$ is a linear code over $\F_q$ generated by a $k\times \theta(q,k)$ matrix $G_k$ having as columns a maximal set of nonproportional vectors from the vector space $\mathbb{F}_q^k$, $\theta(q,k)=(q^k-1)/(q-1)$. In other words, the columns of the matrix represent all points in the projective geometry $PG(k-1,q)$. For more information about linear codes and their parameters we refer to \cite{HP,JP,MWS}.
%All nonzero codewords of $\mathcal{S}_{q,k}$ have the same Hamming weight $q^{k-1}$, so the $q$-ary simplex code is a linear constant weight code with parameters $[\theta(q,k), k, q^{k-1}]$.

Let $C$ be a $k$-dimensional linear code over $\F_q$ and $G$ be a generator matrix of $C$. Without loss of generality we can suppose that $G$ has no zero columns (otherwise we will remove the zero columns).

\begin{definition}
The characteristic vector of the code $C$ with respect to its generator matrix $G$ is the vector
\begin{equation}\label{chi}
\chi(C,G)=\left(\chi_1,\chi_2,\ldots,\chi_{\theta(q,k)}\right)\in \mathbb{Z}^{\theta(q,k)}
\end{equation}
where $\chi_u$ is the number of the columns of $G$ that are equal or proportional to the $u$-th column of $G_k$, $u=1,\ldots,\theta(q,k)$.
\end{definition}

%The extended characteristic vector of the code $C$ with respect to its generator matrix $G$ is
%\begin{equation}
%\overline{\chi}(C,G)=(0,\underbrace{\chi|\chi|\ldots|\chi}_{q-1})
%\end{equation}
When $C$ and $G$ are clear from the context, we will write briefly $\chi$. Note that $\sum_{u=1}^{\theta(q,k)}\chi_u=n$, where $n$ is the length of $C$.

A code $C$ can have different characteristic vectors depending on the chosen generator matrices of $C$ and the considered generator matrix $G_k$ of the simplex code $\mathcal{S}_{q,k}$. If we permute the columns of the matrix $G$ we will obtain a permutation equivalent code to $C$ having the same characteristic vector. Moreover, from a characteristic vector one can restore the columns of the generator matrix $G$ but eventually at different order and/or multiplied by  nonzero elements of the field. This is not a problem for us because the equivalent codes have the same weight distributions.

All codewords of the code are the linear combinations of the rows of a given generator matrix $G$. We can easily obtain all nonzero codewords of $C$ using the multiplication
%Let we fix as basis the rows of a generator matrix $G$ of the code and $\lambda=(\lambda_1,\lambda_2,\ldots,\lambda_k)$ be a row vector from $F_q^k$.
%The linear combination of the rows of $G$ produces a codeword $\lambda.G$ where multiplication is over $F_q$.
 %Because of the propeties of the simplex code we may obtain all codewords by the following multiplication over $F_q$
\begin{equation}\label{eqCWP}
\left(\begin{array}{r}G_k^{\rm T}\\\alpha_2\, G_k^{\rm T}\\ \vdots~~ \\ \alpha_{q-1}\, G_k^{\rm T}\end{array}\right)\cdot G=
\left(\begin{array}{r}G_k^{\rm T}\cdot G\\\alpha_2\, G_k^{\rm T}\cdot G\\ \vdots~~~~~ \\ \alpha_{q-1}\, G_k^{\rm T}\cdot G\end{array}\right).
\end{equation}
To know the weight distribution of the code $C$, it is enough to compute the weights of the rows of the matrix $G^{\rm T}_k\cdot G$.

Further, we consider the matrices $M_k=G_k^{\rm T}\cdot G_k$, $k\in \mathbb N$. We denote by ${\cal N}(M_k)$ the matrix obtained from $M_k$ by replacing all nonzero elements by $1$.

\begin{lemma}\label{lemNormMult}
Let $C$ be an $[n,k]_q$ code, $G$ be its generator matrix and $\chi$ be the characteristic vector of $C$ with respect to $G$. Then the Hamming weight of the $i$-th row of the matrix $G_k^{\rm T}\cdot G$ (multiplication over $\F_q$) is equal to the $i$-th coordinate of the column vector ${\cal N}(M_k)\cdot \chi^{\rm T}$ (multiplication over $\mathbb Z$), $i=1,\ldots,\theta(q,k)$.
\end{lemma}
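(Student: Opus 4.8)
The plan is to unwind both sides of the claimed equality entrywise and match them through a single observation about proportional columns. Write $g_1,\ldots,g_{\theta(q,k)}$ for the columns of $G_k$ and $c_1,\ldots,c_n$ for the columns of $G$. The $(i,j)$ entry of $G_k^{\rm T}\cdot G$ is the inner product $g_i^{\rm T}c_j$ computed over $\F_q$, so the $i$-th row of this matrix is $(g_i^{\rm T}c_1,\ldots,g_i^{\rm T}c_n)$ and its Hamming weight is exactly the number of indices $j$ for which $g_i^{\rm T}c_j\neq 0$.

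First I would record the key fact that makes the whole argument work. Since $G$ has no zero columns and the columns of $G_k$ represent every point of $PG(k-1,q)$, each $c_j$ is proportional to exactly one column of $G_k$, say $c_j=\lambda_j\, g_{u(j)}$ with $\lambda_j\in\F_q\setminus\{0\}$. Because $\lambda_j\neq 0$, we have $g_i^{\rm T}c_j=\lambda_j\,(g_i^{\rm T}g_{u(j)})$, hence $g_i^{\rm T}c_j\neq 0$ if and only if $g_i^{\rm T}g_{u(j)}\neq 0$. This is the one place where the field structure (invertibility of nonzero scalars) enters, and it is really the crux of the lemma: multiplying a column by a nonzero scalar can never change whether a given inner product vanishes.

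Next I would convert the count into the desired matrix product by grouping the columns of $G$ according to their proportionality class. By the definition of the characteristic vector, exactly $\chi_u$ of the indices $j$ satisfy $u(j)=u$. Therefore, writing $[\,\cdot\,]$ for the indicator that equals $1$ when the condition holds and $0$ otherwise,
\[
\wt\bigl(i\text{-th row of }G_k^{\rm T}G\bigr)=\sum_{j=1}^{n}\bigl[g_i^{\rm T}g_{u(j)}\neq 0\bigr]=\sum_{u=1}^{\theta(q,k)}\chi_u\,\bigl[g_i^{\rm T}g_u\neq 0\bigr].
\]
Finally I would identify the indicator with an entry of ${\cal N}(M_k)$: the $(i,u)$ entry of $M_k=G_k^{\rm T}G_k$ is $g_i^{\rm T}g_u$, so by the definition of ${\cal N}$ the $(i,u)$ entry of ${\cal N}(M_k)$ equals $1$ precisely when $g_i^{\rm T}g_u\neq 0$ and $0$ otherwise, that is, it equals $[g_i^{\rm T}g_u\neq 0]$. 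Substituting this into the sum yields $\sum_{u}\bigl({\cal N}(M_k)\bigr)_{iu}\,\chi_u$, which is exactly the $i$-th coordinate of ${\cal N}(M_k)\cdot\chi^{\rm T}$ computed over $\mathbb Z$, completing the argument.

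I do not expect a genuine obstacle here; the entire proof is a direct computation. The only point requiring care is the clean separation between the $\F_q$-arithmetic (used to evaluate the entries and decide nonvanishing) and the $\mathbb Z$-arithmetic (the integer counting performed by ${\cal N}(M_k)$), together with checking that the proportionality classes genuinely partition the columns of $G$ so that their sizes add up to the coordinates $\chi_u$.
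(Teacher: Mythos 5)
Your proof is correct and follows essentially the same route as the paper's own argument: compute the $i$-th row of $G_k^{\rm T}G$ entrywise, group the columns of $G$ by their proportionality classes (whose sizes are the coordinates $\chi_u$), and use the fact that multiplying a column by a nonzero scalar does not change whether an inner product vanishes, so each indicator matches the corresponding entry of ${\cal N}(M_k)$. The only difference is cosmetic: you state the nonzero-scalar invariance explicitly, whereas the paper passes from ${\cal N}(s_i\cdot b_j)$ to ${\cal N}(s_i\cdot s_{u_j})$ without comment.
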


\begin{proof} Let $\theta=\theta(q,k)$ for short, and $s_1,\dots,s_{\theta}$ be the columns of $G_k$. Since
$M_k=(m_{ij})=G_k^T\cdot G_k$, then $m_{ij}=s_i\cdot s_j\in\F_q^k$, where $x\cdot y=x_1y_1+\cdots+x_ky_k\in\F_q$ is the Euclidean inner product of the vectors $x,y\in\F_q^k$ over $\F_q$. Similarly, $v_{ij}=s_i\cdot b_j$, where $b_1,\ldots,b_n$ are the columns of $G$, and $G_k\cdot G=(v_{ij})$. From the definition of the characteristic vector $\chi$ we know that $\chi_1$ of the columns of $G$ are proportional to $s_1$, $\chi_2$ columns are proportional to $s_2$, etc.

For $a\in\F_q$ we define ${\cal N}(a)=0$ if $a=0$ and ${\cal N}(a)=1$ otherwise. If $v_i=(v_{i1},\ldots,v_{in})$ is the $i$-th row of the matrix $G_k\cdot G$, we have
\begin{align*}
  w_i=\wt (v_i) & =\sum_{j=1}^n {\cal N}(s_i\cdot b_j)=\sum_{j=1}^n {\cal N}(s_i\cdot s_{u_j}) \\
   & =\sum_{j=1}^\theta \chi_j{\cal N}(s_i\cdot s_j)=\sum_{j=1}^\theta {\cal N}(m_{ij})\chi_j=u_i,
\end{align*}
where $u_i$ is the $i$-th coordinate of ${\cal N}(M_k)\cdot \chi^{\rm T}$.
\end{proof}

Lemma \ref{lemNormMult} and \eqref{eqCWP} show that the coordinates of the vector ${\cal N}(M_k)\cdot\chi^{\rm T}$ are all weights in a maximal set of codewords in the code $C$ with the following properties: (1) no two codewords in the set are proportional, and (2) any codeword of $C$ is proportional to a codeword belonging to the set. Hence using this matrix by vector multiplication we can obtain the weight distribution of $C$ without calculating all codewords.

To develop a fast algorithm for the proposed matrix by vector multiplication, we use a modified Walsh-Hadamard transform.
Let $h(x)=h(x_1,\dots,x_k)$ be a Boolean function in $k$ variables. Discrete
Walsh--Hadamard transform $\hat{h}$ of $h$ is the integer valued function $\hat{h}:\mathbb{F}_2^k\to\mathbb{Z}$, defined by
\begin{equation}
\hat{h}(\omega)=\sum_{x\in \mathbb{F}_2^k}h(x)(-1)^{x\cdot\omega},\quad \omega\in\mathbb{F}_2^k
\end{equation}
where $x\cdot\omega$ is the Euclidean inner product.
%$$f^W(a)=\sum_{x\in\mathbb{F}_2^n} (-1)^{f(x)\oplus \langle a,x\rangle}.$$
This transform is equivalent to the multiplication of the Truth Table of $h$ by the matrix $H_k=\otimes^k\left(\begin{array}{rr}1&1\\1&-1\end{array}\right)$.
The Kronecker power of a matrix can be represented as a product of sparse matrices \cite{Good} that leads to the more effective butterfly algorithm for calculation (fast transform).

\section{Binary codes}
\label{sec:binary}

There is a method based on the fast Walsh-Hadamard transform for the computation of the weight distribution of a given binary linear code. The complexity of this computation is $O(k2^k)$ \cite{Karpovsky79}.

In this case the columns of a generator matrix of the simplex code $\mathcal{S}_k$ are all nonzero vectors from $\F_2^k$. We take
$G_k=(\overline{1}^T \ \cdots \ \overline{2^k-1}^T)$, where $\overline{u}$ is the binary representation of the integer $u$, considered as a vector with $k$ coordinates, $1\le u\le 2^k-1$.

If $C$ is an $[n,k,d]$ binary linear code with a characteristic vector $\chi_C$, then
$$M_k\cdot \chi_C^T=(G_k^T\cdot G_k)\chi_C^T=\left(\begin{array}{c}w_1\\ \vdots \\ w_{2^k-1}\\
\end{array}\right),$$
where $w_1,\ldots,w_{2^k-1}$ are the weights of all nonzero codewords in $C$.

If
$\overline{M}_k=\left(\begin{array}{c|c}0&0\ldots 0\\ \hline \mathbf{0}^T &M_k\\
\end{array}\right)$, then the matrix $J-2\overline{M}_k$ is a square $\pm 1$ matrix of order $2^k$, which is equal to the Hadamard matrix of Sylvester type $H_k=\otimes^k\left(\begin{array}{rr}1&1\\1&-1\end{array}\right)$ (by $J$ we denote the all 1's matrix of the corresponding size).
%The matrices $H_k$ can be defined recursively as
%$$H_0=(1), \ \ H_1=\left(\begin{array}{cr} 1&1\\ 1&-1\\ \end{array}\right), \ \
%H_n=\left(\begin{array}{cr} H_{n-1}&H_{n-1}\\ H_{n-1}&-H_{n-1}\\ \end{array}\right)=H_1\otimes H_{n-1}
%\ \mbox{for} \ n\ge 2.$$
It follows that $$H_k\overline{\chi}_C^T=(J-2\overline{M}_k)\overline{\chi}_C^T=\left(\begin{array}{c}n\\ n-2w_1\\ \vdots \\ n-2w_{2^k-1}\\
\end{array}\right)=\hat{\chi}_C,$$ where $\hat{\chi}_C$ is the Walsh transform of $\overline{\chi}_C=(0,\chi_C)$, if we consider this characteristic vector as a Truth Table of a Boolean function. Hence we can obtain the weight distribution of $C$ after applying the Walsh-Hadamard transform on its characteristic vector. Algorithm \ref{Alg:butterfly1} presents the pseudo code of the corresponding butterfly implementation.

\begin{algorithm}[ht]
\caption{Butterfly Algorithm for Fast Walsh Transform}\label{Alg:butterfly1}
\begin{algorithmic}[1]
\REQUIRE The extended characteristic vector $\overline{\chi}_C$ with length $2^k$
\ENSURE An updated array $W$ -- the result of the transform
\STATE $j \leftarrow 1$; $W\leftarrow \overline{\chi}(C)$;
%\FOR {$l=1$ \TO$km$}
%\STATE $S=2^l$, $s=2^{l-1}$;
\WHILE {$j <2^k$}
\FOR {$u=0$ \TO $2^k-1$}
\IF {$\overline{u}[j]=0$}
\STATE $tt\leftarrow W[u]$;
\STATE $W[u]\leftarrow W[u]+W[u+j]$;
\STATE $W[u+j]\leftarrow tt-W[u+j]$;
\ENDIF
\ENDFOR
\STATE $j\leftarrow 2j$;
\ENDWHILE
\end{algorithmic}
\end{algorithm}

For   more details on the butterfly algorithms and diagrams we refer to \cite{Joux}. The algorithm is very suitable for parallel realization, especially with CUDA  parallel computing platform.

\section{Codes over prime fields}
\label{sec:prime}

We define a sequence of matrices $G_k$ as follows:
\begin{equation}\label{Gk}
G_1=(1),\quad G_{k}=\left(\begin{array}{ccccc}\mathbf{0}&\mathbf{1}&\ldots&\mathbf{p-1}&1\\
G_{k-1}&G_{k-1}&\ldots&G_{k-1}&\mathbf{0}^T\end{array}\right),\; k\in\mathbb{Z}, k\geq 2,
\end{equation}
where $\mathbf{u}=(u,\dots,u)=u(1,1,\dots,1)=u.$\textbf{1}, $u=0,1,\dots,p-1$. The size of the matrix $G_k$ is $k\times \theta(p,k)$. All columns in $G_k$ are pairwise linearly independent, so $G_k$ is a generator matrix of $\mathcal{S}_{p,k}$.

Let $C$ be a linear $[n,k,d]$ code over the prime field $\F_p=\{0,1,\dots,p-1\}$ with a characteristic vector $\chi$ with respect to its generator matrix $G$ and the matrix $G_k$ as defined in \eqref{Gk}. To obtain the weight distribution of $C$, we need to calculate the product $M_k\chi^T$.

 Using \eqref{Gk} we obtain a recurrence relation for the matrices $M_k$ as follows:  %$M_1=(1)$  and $\forall k\in{\mathbb Z}, k\geq 2$
\begin{equation}\label{eqMk}
M_{k}=\left(\begin{array}{ccccc}M_{k-1}&M_{k-1}&\ldots&M_{k-1}&\mbox{\textbf{0}$^T$}\\
M_{k-1}&M_{k-1}+J&\ldots&M_{k-1}+(p-1)J&\mbox{\textbf{1}$^T$}\\
M_{k-1}&M_{k-1}+2J&\ldots&M_{k-1}+2(p-1)J&\mbox{\textbf{2}$^T$}\\
\vdots\\
M_{k-1}&M_{k-1}+(p-1)J&\ldots&M_{k-1}+(p-1)^2J&\mbox{\textbf{(p-1)}$^T$}\\
\mbox{\boldmath $0$}&\mbox{\boldmath $1$}&\ldots&\mbox{\boldmath $p-1$}&1\end{array}\right),
\end{equation}
$k\in{\mathbb Z}, k\geq 2$, $M_1=(1)$.
The matrix $J$ in the above formula is the $\theta(p,k-1)\times \theta(p,k-1)$ matrix with all elements equal to 1.
The form of the matrix $G_k$ is especially chosen. It enables the possibility to have only additions of matrices in the recurrence relation \eqref{eqMk}. Denote $\theta(p,k)$ by $\theta_k$, $k\in\mathbb{N}$. Then $\theta_k=p\theta_{k-1}+1$ for $k\ge 2$.
Unfortunately, there is no comfortable recurrence relation for the matrices ${\cal N}(M_k)$. To overcome this, we introduce the matrices $M_{k}^{[\chi]}=\left(\begin{array}{c}m_1^{[\chi]}\\
\vdots\\
m_{\theta_k}^{[\chi]}\end{array}\right),
$
where $m_i^{[\chi]}=(\omega_0^{(i)},\omega_1^{(i)},\ldots,\omega_{p-1}^{(i)})\in{\mathbb Z}^p$, $m_i$ is the $i$-th row of $M_{k}$, and $\omega_u^{(i)}=\sum\{\chi_j : m_{ij}=u, 1\le j\le\theta_k\}$, $u=0,1,\dots, p-1.$

\begin{theorem}\label{thm:charspectrum} The $i$-th coordinate $w_i$ of ${\cal N}(M_k)\cdot\chi^{\rm T}$ is equal to $n-\omega_0^{(i)}$, where $n$ is the length of the code, and $m_i^{[\chi]}$ is the $i$-th row of $M_{k}^{[\chi]}$.
\end{theorem}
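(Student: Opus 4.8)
The plan is to read off the $i$-th coordinate of ${\cal N}(M_k)\cdot\chi^{\rm T}$ directly from the matrix-vector product and then split it according to whether the relevant entry of $M_k$ vanishes. Concretely, by the definition of matrix multiplication --- this is exactly the computation already carried out in the proof of Lemma \ref{lemNormMult} --- the $i$-th coordinate is $w_i=\sum_{j=1}^{\theta_k}{\cal N}(m_{ij})\chi_j$, where $m_{ij}$ is the $(i,j)$-entry of $M_k$. Since ${\cal N}(a)$ equals $0$ when $a=0$ and $1$ otherwise, the factor ${\cal N}(m_{ij})$ is simply the indicator that $m_{ij}\neq 0$, so only the columns $j$ with nonzero inner product contribute to $w_i$.

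First I would rewrite $w_i=\sum_{j:\,m_{ij}\neq 0}\chi_j$ as the complementary difference $\sum_{j=1}^{\theta_k}\chi_j-\sum_{j:\,m_{ij}=0}\chi_j$. The total $\sum_{j=1}^{\theta_k}\chi_j$ equals the length $n$, by the remark following the definition of the characteristic vector. The subtracted sum collects precisely those $\chi_j$ for which $m_{ij}=0$, which is exactly $\omega_0^{(i)}$ by its definition. Combining the two observations yields $w_i=n-\omega_0^{(i)}$, which is the asserted identity.

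Behind this lies the partition identity that makes the vectors $m_i^{[\chi]}$ genuinely useful: since each entry $m_{ij}$ takes exactly one value in $\F_p=\{0,1,\dots,p-1\}$, grouping the columns of the $i$-th row of $M_k$ by that value partitions $\{1,\dots,\theta_k\}$, so that $\sum_{u=0}^{p-1}\omega_u^{(i)}=n$. Thus once $m_i^{[\chi]}$ is known one recovers not only $w_i=n-\omega_0^{(i)}$ but the entire distribution of inner-product values along row $i$, which is what the recursive computation of $M_k^{[\chi]}$ via \eqref{eqMk} will exploit. There is no serious obstacle in the statement itself: it is a one-line consequence of Lemma \ref{lemNormMult} together with the normalization $\sum_j\chi_j=n$. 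The only point demanding a moment of care is consistency of indexing --- that $\omega_0^{(i)}$ aggregates exactly the zero-inner-product columns of row $i$, and that the admissible values $u=0,1,\dots,p-1$ genuinely exhaust all entries precisely because we work over the prime field $\F_p$.
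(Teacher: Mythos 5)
Your proof is correct and is essentially the paper's own argument: both rest on the identity $w_i=\sum_{j=1}^{\theta_k}{\cal N}(m_{ij})\chi_j$ from Lemma \ref{lemNormMult}, the normalization $\sum_j\chi_j=n$, and the definition of $\omega_0^{(i)}$ as the sum of $\chi_j$ over columns with $m_{ij}=0$; the paper merely writes the same chain of equalities starting from $n-\omega_0^{(i)}$ and ending at $w_i$, while you run it in the opposite direction. Your closing remarks on the partition identity $\sum_{u=0}^{p-1}\omega_u^{(i)}=n$ are a correct (if unneeded) observation.
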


\begin{proof} According to the definition of $m_i^{[\chi]}$ and Lemma \ref{lemNormMult}, we have
\begin{align*}
  n-\omega_0^{(i)} & =\sum_{j=1}^{\theta_k} \chi_j - \sum\{\chi_j : m_{ij}=0, 1\le j\le\theta_k\} \\
   & =\sum\{\chi_j : m_{ij}\neq 0, 1\le j\le\theta_k\}\\
   & =\sum_{j=1}^{\theta_k} \chi_j N(m_{ij})=u_i=w_i.
\end{align*}
\end{proof}

According to Lemma \ref{lemNormMult}, the coordinates of the vector ${\cal N}(M_k)\cdot\chi^{\rm T}=(w_1,w_2,\dots,w_{\theta_k})^T$ are the weights of all codewords from a maximal subset of the code, where the maximal subset has the following properties: (1) no two codewords in the set are proportional, and (2) any codeword outside this set is proportional to a codeword belonging to the set. Hence if $N_j=\sharp \{i \ : \ w_i=j\}$, then the number of codewords of weight $j$ in the code is $A_j=(q-1)N_j$.
According to Theorem \ref{thm:charspectrum}, $w_i=n-\omega_0^{(i)}$ and so $N_j=\sharp \{i \ : \ \omega_0^{(i)}=n-j\}$.

We are looking for a recurrence relation for the matrices $M_{k}^{[\chi]}$. Our aim is to use such a relation to obtain a transform matrix which can be represented as a product of sparse matrices. This could give a butterfly algorithm for fast computation.

Using the relation $\theta(p,k)=p\cdot \theta(p,k-1)+1$, we split the characteristic vector $\chi$ of the $[n,k]_p$ code $C$ into $p+1$ parts as follows
\begin{equation}\label{eqChiSplit}
\chi=\left(\chi^{(0)}|\chi^{(1)}|\ldots|\chi^{(p-1)}|\chi^{(p)}\right)
\end{equation}
where $\chi^{(j)}\in {\mathbb Z}^{\theta(p,k-1)}$, $j=0,\ldots,p-1$, and $\chi^{(p)}\in{\mathbb Z}$.
Splitting the $i$-th row of $M_k$ similarly to \eqref{eqChiSplit}, we have
$m_i=\left(m_i^{(0)}|m_i^{(1)}|\ldots|m_i^{(p-1)}|m_i^{(p)}\right)$, $m_i^{(j)}\in \F_p^{\theta(p,k-1)}$, $j=0,\ldots,p-1$, $m_i^{(p)}\in\F_p$. According to \eqref{eqMk}, we obtain
$m_i^{(s)}[j]=a_{rj}+st$, where $i=t\theta_{k-1}+r$, $1\le r\le \theta_{k-1}$, and $a_{rj}$ are the elements of the matrix $M_{k-1}$, $1\le r,j\le\theta_{k-1}$, $0\le s\le p-1$, and $m_i^{(p)}=r$. Since $(m_i+$\textbf{1})$^{[\chi]}=\sigma(m_i^{[\chi]})$ and $(m_i+$\textbf{s})$^{[\chi]}=\sigma^s(m_i^{[\chi]})$ where where $\sigma$ is the right circular shift, $s\ge 1$, then
\begin{align*}
  m_i^{[\chi]}& =(\sum\{\chi_j : m_{ij}=u, 1\le j\le\theta(p,k)\})_{u=0}^{p-1} \\
  %& =(\sum_{s=0}^{p-1}\sum\{\chi_j : m_{ij}=u, s\theta_{k-1}+1\le j\le(s+1)\theta_{k-1}\}+r^{[\chi(p)]})_{u=0}^{p-1}\\
  &=(\sum_{s=0}^{p-1}\sum\{\chi^{(s)}_{j'} : m^{(s)}_{i}[j']=u, 1\le j'\le\theta_{k-1}\}+r^{[\chi(p)]})_{u=0}^{p-1}\\
  &=(\sum_{s=0}^{p-1}\sum\{\chi^{(s)}_{j'} : a_{rj'}+st=u, 1\le j'\le\theta_{k-1}\}+r^{[\chi(p)]})_{u=0}^{p-1}\\
  &=\sum_{s=0}^{p-1}((a_r+st)^{[\chi^{(s)}]}+r^{[\chi(p)]})_{u=0}^{p-1} =\sum_{s=0}^{p-1}(\sigma^{st}(a_r^{[\chi(s)]})+r^{[\chi(p)]})_{u=0}^{p-1}.
\end{align*}
Hence the following recurrence relation holds
%\[M_{k}^{[\chi]}=
%\left(\begin{array}{ccc}
%\sum_{s=0}^{p-1} M_{k-1}^{[\chi^{(s)}]}&+&\mbox{\textbf{0}}^{[\chi^{(p)}]}\\
%\sum_{s=0}^{p-1} (M_{k-1}+sJ)^{[\chi^{(s)}]}&+&\mbox{\textbf{1}}^{[\chi^{(p)}]}\\
%\sum_{s=0}^{p-1} (M_{k-1}+2sJ)^{[\chi^{(s)}]}&+&\mbox{\textbf{2}}^{[\chi^{(p)}]}\\
%\vdots&&\vdots\\
%\sum_{s=0}^{p-1} (M_{k-1}+(p-1)sJ)^{[\chi^{(s)}]}&+&\mbox{\textbf{(p-1)}}^{[\chi^{(p)}]}\\
%\sum_{s=0}^{p-1} \mbox{\textbf{s}}^{[\chi^{(s)}]}&+&1^{[\chi^{(p)}]}
%\end{array}\right)
%\]
\begin{equation}\label{eqChMk}
M_{k}^{[\chi]}
={\small\left(\begin{array}{ccccccc}M_{k-1}^{[\chi^{(0)}]}&+&M_{k-1}^{[\chi^{(1)}]}&+\cdots+& M_{k-1}^{[\chi^{(p-1)}]}&+&\mbox{\textbf{0}}^{[\chi^{(p)}]}\\
M_{k-1}^{[\chi^{(0)}]}&+&\sigma(M_{k-1}^{[\chi^{(1)}]})&+\cdots+&\sigma^{p-1}(M_{k-1}^{[\chi^{(p-1)}]})&+&\mbox{\textbf{1}}^{[\chi^{(p)}]}\\
M_{k-1}^{[\chi^{(0)}]}&+&\sigma^2(M_{k-1}^{[\chi^{(1)}]})&+\cdots+&\sigma^{2(p-1)}(M_{k-1}^{[\chi^{(p-1)}]})&+&\mbox{\textbf{2}}^{[\chi^{(p)}]}\\
&&&\vdots&&&\\
M_{k-1}^{[\chi^{(0)}]}&+&\sigma^{p-1}(M_{k-1}^{[\chi^{(1)}]})&+\cdots+&\sigma^{(p-1)^2}(M_{k-1}^{[\chi^{(p-1)}]})&+&\mbox{\textbf{(p-1)}}^{[\chi^{(p)}]}\\
\mbox{\textbf{0}}^{[\chi^{(0)}]}&+&\mbox{\textbf{1}}^{[\chi^{(1)}]}&+\cdots+&\mbox{\textbf{(p-1)}}^{[\chi^{(p-1)}]}& +&1^{[\chi^{(p)}]}
\end{array}\right)}
\end{equation}
 So we can use permutations and additions to compute $M_k^{[\chi]}$ from $M_{k-1}^{[\chi^{(0)}]}$, $M_{k-1}^{[\chi^{(1)}]},\ldots,M_{k-1}^{[\chi^{(p-1)}]}$ and $\chi^{(p)}$.
Moreover, \textbf{s}$^{[\chi]}$ can be obtained from \textbf{0}$^{[\chi]}$ by right circular shift operation. Note that all coordinates of \textbf{0}$^{[\chi]}$ are $0$'s except the first column whose elements are equal to the sum of all coordinates of $\chi$.

\begin{example}
If $p=3$ the recurrence relation \eqref{eqChMk} is
\begin{equation}\label{eqChMk3}
M_{k}^{[\chi]}=\left(\begin{array}{ccccccc}M_{k-1}^{[\chi^{(0)}]}&+&M_{k-1}^{[\chi^{(1)}]}&+&M_{k-1}^{[\chi^{(2)}]}&+& \mbox{\textbf{0}}^{[\chi^{(3)}]}\\
M_{k-1}^{[\chi^{(0)}]}&+&\sigma(M_{k-1}^{[\chi^{(1)}]})&+&\sigma^2(M_{k-1}^{[\chi^{(2)}]})&+&\mbox{\textbf{1}}^{[\chi^{(3)}]}\\
M_{k-1}^{[\chi^{(0)}]}&+&\sigma^2(M_{k-1}^{[\chi^{(1)}]})&+&\sigma(M_{k-1}^{[\chi^{(2)}]})&+&\mbox{\textbf{2}}^{[\chi^{(3)}]}\\
\mbox{\textbf{0}}^{[\chi^{(0)}]}&+&\mbox{\textbf{1}}^{[\chi^{(1)}]}&+&\mbox{\textbf{2}}^{[\chi^{(2)}]}&+&1^{[\chi^{(3)}]}
\end{array}\right)
\end{equation}
Let $k=3$ and $\chi=(0,4,3,2,0,8,5,1,1,4,3,2,3)$. We split $\chi$ into $4$ parts $$\chi^{(0)}=(0,4,3,2), \ \ \chi^{(1)}=(0,8,5,1), \ \ \chi^{(2)}=(1,4,3,2), \ \ \chi^{(3)}=3.$$
 %To obtain $M_3^{[\chi]}$, we have to calculate first $M_2^{[\chi^{(0)}]}$, $M_2^{[\chi^{(1)}]}$ and $M_2^{[\chi^{(2)}]}$ where
 Since
$M_2=\left(\begin{array}{llll}1&1&1&0\\1&2&0&1\\1&0&2&2\\0&1&2&1\end{array}\right)$, we have
\[M_2^{[\chi^{(0)}]}=\left(\begin{array}{rrr}2&7&0\\3&2&4\\4&0&5\\0&6&3
\end{array}\right), \ \ M_2^{[\chi^{(1)}]}= \left(\begin{array}{rrr}1&13&0\\5&1&8\\8&0&6\\0&9&5\end{array}\right) \ \ M_2^{[\chi^{(2)}]}= \left(\begin{array}{rrr}2&8&0\\3&3&4\\4&1&5\\1&6&3\end{array}\right).\]
%To explain the role of \eqref{eqChMk3} we will calculate $M_2^{[\chi^{(2)}]}$ from $M_1^{[1]}$, $M_1^{[4]}$, $M_1^{[3]}$ and $M_1^{[2]}$.
%\[M_2^{[\chi^{(2)}]}=\left(\begin{array}{llll}(0,1,0)&+(0,4,0)&+(0,3,0)&+(2,0,0)\\
%(0,1,0)&+(0,0,4)&+(3,0,0)&+(0,2,0)\\
%(0,1,0)&+(4,0,0)&+(0,0,3)&+(0,0,2)\\
%(1,0,0)&+(0,4,0)&+(0,0,3)&+(0,2,0)\end{array}\right)= \left(\begin{array}{rrr}2&8&0\\3&3&4\\4&1&5\\1&6&3\end{array}\right)\]
We calculate $M_3^{[\chi]}$ from $M_2^{[\chi^{(0)}]}$, $M_2^{[\chi^{(1)}]}$, $M_2^{[\chi^{(2)}]}$ and $\chi^{(3)}$ by \eqref{eqChMk3}:
\[M_3^{[\chi]}=\left(\begin{array}{@{}*{7}{c@{}}}
\left(\begin{array}{rrr}2&7&0\\3&2&4\\4&0&5\\0&6&3\end{array}\right)&
+&\left(\begin{array}{rrr}1&13&0\\5&1&8\\8&0&6\\0&9&5\end{array}\right)&
+&\left(\begin{array}{rrr}2&8&0\\3&3&4\\4&1&5\\1&6&3\end{array}\right)&
+&\left(\begin{array}{rrr}3&0&0\\3&0&0\\3&0&0\\3&0&0\end{array}\right)
\\
\left(\begin{array}{rrr}2&7&0\\3&2&4\\4&0&5\\0&6&3\end{array}\right)&
+&\left(\begin{array}{rrr}0&1&13\\8&5&1\\6&8&0\\5&0&9\end{array}\right)&
+&\left(\begin{array}{rrr}8&0&2\\3&4&3\\1&5&4\\6&3&1\end{array}\right)&
+&\left(\begin{array}{rrr}0&3&0\\0&3&0\\0&3&0\\0&3&0\end{array}\right)
\\
\left(\begin{array}{rrr}2&7&0\\3&2&4\\4&0&5\\0&6&3\end{array}\right)&
+&\left(\begin{array}{rrr}13&0&1\\1&8&5\\0&6&8\\9&5&0\end{array}\right)&
+&\left(\begin{array}{rrr}0&2&8\\4&3&3\\5&4&1\\3&1&6\end{array}\right)&
+&\left(\begin{array}{rrr}0&0&3\\0&0&3\\0&0&3\\0&0&3\end{array}\right)
\\
\left(\begin{array}{rrr}9&0&0\end{array}\right)&
+&\left(\begin{array}{rrr}0&14&0\end{array}\right)&
+&\left(\begin{array}{rrr}0&0&10\end{array}\right)&
+&\left(\begin{array}{rrr}0&3&0\end{array}\right)
\end{array}\right)
=\left(\begin{array}{rrr}8&28&0\\14&6&16\\19&1&16\\4&21&11\\ \hline10&11&15\\
14&14&8\\11&16&9\\11&12&13\\ \hline 15&9&12\\ 8&13&15\\9&10&17\\12&12&12\\ \hline 9&17&10\end{array}\right).\]
\end{example}

Next we define one more sequence of matrices connected to $M_k^{[\chi]}$ which we use in the algorithms.

\begin{definition}\label{defPCChS}
Let $k\in \mathbb N$ and $\chi=(\chi_1,\dots,\chi_{\theta(p,k)})\in {\mathbb Z}^{\theta(p,k)}$. The matrices $M_{k}^{[\chi]}(l)$, $l=1,\dots,k$, are defined recursively as follows
\begin{enumerate}
\item $M_k^{[\chi]}(k)=M_k^{[\chi]}$.
\item For $1\le l< k$, the vector $\chi$ is split into $p+1$ parts as in \eqref{eqChiSplit} and
\[M_k^{[\chi]}(l)=\left(\begin{array}{l}M_{k-1}^{[\chi^{(0)}]}(l)\\ M_{k-1}^{[\chi^{(1)}]}(l)\\ \ldots \\ M_{k-1}^{[\chi^{(p-1)}]}(l)\\ M_1^{[\chi^{(p)}]}\end{array}\right).\]
\end{enumerate}
\end{definition}

The matrix
 $M_k^{[\chi]}(1)$ is a $\theta_k\times p$ matrix with rows $M_1^{[\chi_i]}$,  $i=1,\dots,\theta_k$, where $\chi=(\chi_1,\dots,\chi_{\theta_k})$. Since $M_1=(1)$, the columns of the matrix $M_k^{[\chi]}(1)$ are zero vectors except the second one which is equal to $\chi$.

Note that the last row of the matrices $M_k^{[\chi]}(l)$ for $l=1,\dots,k-1$ is the same, namely $M_1^{[\chi^{(p)}]}=(0,\chi^{(p)},0,\ldots,0)$. Furthermore, the row before the last one in $M_k^{[\chi]}(l)$ is the same for $l=1,\dots,k-2$.
 Actually, for all $l<k$ there are rows equal to $M_1^{[\ast]}$ in the matrix $M_k^{[\chi]}(l)$ that are the same as in the previous matrices $M_k^{[\chi]}(l^{\prime})$, $l^{\prime}<l$. We call them \textit{inactive rows}. There are $\theta_{k-l}$ inactive rows in $M_k^{[\chi]}(l)$, $l=2,\dots,k-1$. %It is a bit difficult to calculate the positions of the inactive rows. See Algorithm~\ref{alCChS} for details.

%\begin{comment}
\begin{example}
Let $p=3$, $k=3$ and $\chi=(0,4,3,2,0,8,5,1,1,4,3,2,3)$. Then
\[M_3^{[\chi]}(1)=\left(\begin{array}{ccc}0&0&0\\0&4&0\\0&3&0\\0&2&0\\0&0&0\\0&8&0\\0&5&0\\0&1&0\\0&1&0\\0&4&0\\0&3&0\\0&2&0\\0&3&0\end{array}\right),\;
M_3^{[\chi]}(2)=\left(\begin{array}{ccc}2&7&0\\3&2&4\\4&0&5\\0&6&3\\
\hline
1&13&0\\5&1&8\\8&0&6\\0&9&5\\
\hline
2&8&0\\3&3&4\\4&1&5\\1&6&3\\
\hline
0&3&0\end{array}\right),\;
M_3^{[\chi]}(3)=\left(\begin{array}{ccc}8&28&0\\14&6&16\\19&1&16\\4&21&11\\10&11&15\\14&14&8\\11&16&9\\11&12&13\\15&9&12\\ 8&13&15\\9&10&17\\12&12&12\\ 9&17&10\end{array}\right)\]
\end{example}
%\end{comment}

%\begin{example}
%Let $q=3$, $k=4$. Using \eqref{eqChiSplit} we split a characteristic vector $\chi$ into parts as follows
%\[\chi=\big(\underbrace{\chi^{(0,0)}|\chi^{(0,1)}|\chi^{(0,2)}|\chi^{(0,3)}}_{\chi^{(0)}}|
%\underbrace{\chi^{(1,0)}|\chi^{(1,1)}|\chi^{(1,2)}|\chi^{(1,3)}}_{\chi^{(1)}}|
%\underbrace{\chi^{(2,0)}|\chi^{(2,1)}|\chi^{(2,2)}|\chi^{(2,3)}}_{\chi^{(2)}}|\chi^{(3)}
%\big)\]
%In $M_4^{[\chi]}(3)$ there is one inactive row, in $M_4^{[\chi]}(2)$ there are 4 inactive rows:
%\[M_4^{[\chi]}(3)=\left(\begin{array}{l}M_{3}^{[\chi^{(0)}]}(3)\\ M_{3}^{[\chi^{(1)}]}(3)\\ M_{3}^{[\chi^{(2)}]}(3)\\ M_1^{[\chi^{(3)}]}\end{array}\right),\quad
%M_4^{[\chi]}(2)=\left(\begin{array}{l}M_{3}^{[\chi^{(0)}]}(2)\\ M_{3}^{[\chi^{(1)}]}(2)\\ M_{3}^{[\chi^{(2)}]}(2)\\ M_1^{[\chi^{(3)}]}\end{array}\right)=
%\left(\begin{array}{l}M_{2}^{[\chi^{(0,0)}]}(2)\\ M_{2}^{[\chi^{(0,1)}]}(2)\\ M_{2}^{[\chi^{(0,2)}]}(2)\\M_1^{[\chi^{(0,3)}]}\\
%M_{2}^{[\chi^{(1,0)}]}(2)\\ M_{2}^{[\chi^{(1,1)}]}(2)\\ M_{2}^{[\chi^{(1,2)}]}(2)\\M_1^{[\chi^{(1,3)}]}\\
%M_{2}^{[\chi^{(2,0)}]}(2)\\ M_{2}^{[\chi^{(2,1)}]}(2)\\ M_{2}^{[\chi^{(2,2)}]}(2)\\M_1^{[\chi^{(2,3)}]}\\
%M_1^{[{\chi}^{(3)}]}\end{array}\right)\]
%\end{example}

Till the end of this section, we present an algorithm for calculating $M_k^{[\chi]}$ computing successively $M_k^{[\chi]}(1)$, $M_k^{[\chi]}(2)$,..., $M_k^{[\chi]}(k-1)$, $M_k^{[\chi]}(k)$. The pseudo code of the main procedure is given in Algorithm \ref{alCChS}.

\begin{algorithm}[ht]
\caption{Main Procedure}\label{alCChS}
\begin{algorithmic}[1]
\REQUIRE a prime $p$, an integer $k$, and a vector $\chi$ of length $\theta(p,k)$ with integer coordinates
\quad\COMMENT{$k$ is the dimension of the considered $p$-ary code given by its characteristic vector $\chi$}
%\REQUIRE $\theta(q,k)\times q$ table $CH$\quad\COMMENT{$M_k^{[\chi]}(i-1)$}
%\REQUIRE an integer $i$\quad\COMMENT{$2\leq i\leq k$}
%\REQUIRE a small step size $stepsize$\quad\COMMENT{$\theta(q,i-1)$}
%\REQUIRE a big step size $newstepsize$\quad\COMMENT{$\theta(q,i)$}
\ENSURE the array $H$\quad\COMMENT{$H=M_k^{[\chi]}$}
\STATE $H:=M_k^{[\chi]}(1)$
%\STATE $\theta:=\theta(q,k)=\frac{q^{k}-1}{q-1}$;
\STATE $\theta_1:=1$;
\FOR {$l=2$ \TO$k$}
%\STATE $u=0$\quad\COMMENT{an index for rows of $CH$}
\STATE Initialize an array $a$ of length $k$, $a:=\textbf{0}$\quad\COMMENT{a help array for monitoring the inactive rows}
\STATE $\theta_0:=\theta_1$;
\STATE $\theta_1:=\theta(p,l)=\frac{p^{l}-1}{p-1}=p\theta_0+1$;
\STATE $r:=0$;
\WHILE {$r<\theta$}
\STATE $r_0:=r$ \quad\COMMENT{$r_0+1$ is the index of the first row of the considered submatrix}
\STATE $r:=r+\theta_1$\quad\COMMENT{the index for the last row of the considered submatrix}
\STATE NewH($H,r_0,r,\theta_0$) \quad\COMMENT{Computes $M_{l}^{[\chi^{(*)}]}$  for the current part of $\chi$}
\STATE $s:=l$
\STATE $a[s]:=a[s]+1$
\WHILE {$a[s]=q$}
\STATE $r:=r+1$\quad\COMMENT{skipping an inactive row}
\STATE $a[s]:=0$
\STATE $s:=s+1$
\STATE $a[s]:=a[s]+1$
\ENDWHILE
\ENDWHILE
\ENDFOR
\end{algorithmic}
\end{algorithm}

Algorithm \ref{alBChS} shows how to obtain $M_k^{[\chi]}(l)$ from $M_k^{[\chi]}(l-1)$. It consists of three main transformations which we call \textsc{Add0}, \textsc{LastRow} and \textsc{AllRows}.
Let explain them in the case $l=k$.
We start with the array

$$M_{k}^{[\chi]}(k-1)=\left(\begin{array}{l}M_{k-1}^{[\chi^{(0)}]}\\ M_{k-1}^{[\chi^{(1)}]}\\ \ldots \\ M_{k-1}^{[\chi^{(p-1)}]}\\ M_1^{[\chi^{(p)}]}\end{array}\right)=\left(\begin{array}{c}M_{k-1}^{[\chi^{(0)}]}\\ M_{k-1}^{[\chi^{(1)}]}\\ \ldots \\ M_{k-1}^{[\chi^{(p-1)}]}\\ 0,\chi^{(p)},0,\dots,0\end{array}\right).$$

%Note that $a_q$ take part in \eqref{eqFt} and \eqref{eqFq} as the element in the zero position of $a_1$.

%%%%%%%%%%%%%%%%%%%%%%%%%%%%%%%%%%%%%%%%%%%%%%%%%
\begin{enumerate}
\item \textsc{Add0}: First we apply the left circular shift operation on the last row of the matrix $M_k^{[\chi]}(k-1)$. Then we add the obtained vector $\mbox{lcs}(M_1^{[\chi^{(p)}]})=(\chi^{(p)},0,\dots,0)$ to all rows of $M_{k-1}^{[\chi^{(1)}]}$.
$$M_{k}^{[\chi]}(k-1)=\left(\begin{array}{c}M_{k-1}^{[\chi^{(0)}]}\\ M_{k-1}^{[\chi^{(1)}]}\\ \ldots \\ M_{k-1}^{[\chi^{(p-1)}]}\\ 0,\chi^{(p)},0,\dots,0\end{array}\right)\longrightarrow
\left(\begin{array}{c}M_{k-1}^{[\chi^{(0)}]}\\ M_{k-1}^{[\chi^{(1)}]}+\bom{0}^{[\chi^{(p)}]}\\ \ldots \\ M_{k-1}^{[\chi^{(p-1)}]}\\ 0,\chi^{(p)},0,\dots,0\end{array}\right)$$

\item \textsc{LastRow}: In this step we calculate the last row of $M_k^{[\chi]}(k)$ equal to
\begin{align*}
  \sum_{s=0}^{p-1}\mbox{\boldmath $s$}^{[\chi^{(s)}]}+1^{[\chi^{(p)}]} & = (\sum_{i=1}^{\theta_{k-1}}\chi_i,\sum_{i=\theta_{k-1}+1}^{2\theta_{k-1}}\chi_i+\chi^{(p)},\ldots, \sum_{i=\theta_k-\theta_{k-1}}^{\theta_k-1}\chi_i) \\
   & =(\sum_{i=0}^{p-1}\omega_{0,i},\sum_{i=0}^{p-1}\omega_{1,i},\ldots,\sum_{i=0}^{p-1}\omega_{p-1,i}).
\end{align*}
where $(\omega_{j,0},\omega_{j,1},\dots,\omega_{j,p-1})$ is the first row of the matrix $M_{k-1}^{[\chi^{(j)}]}$, $j=0,2,\dots,p-1$, and $(\omega_{1,0},\dots,\omega_{1,p-1})$ is the first row of the transformed in \textsc{Add0} submatrix $M_{k-1}^{[\chi^{(1)}]}$.

\begin{algorithm}[ht]
\caption{Function NewH($H,r_0,r,\theta$) }\label{alBChS}
\begin{algorithmic}[1]
\REQUIRE The array $H$ and the integers $r_0,r,\theta$ \quad\COMMENT{parameters that fix a considered submatrix}
%\REQUIRE $\theta(q,k)\times q$ table $CH$\quad\COMMENT{with values of $M_k^{[\chi]}(i-1)$ throughout the part}
%\REQUIRE a position $su$ of the first row of the part
%\REQUIRE a position $u$ of the last row of the part
%\REQUIRE a small step size $stepsize$\quad\COMMENT{$\theta(q,i-1)$}
\ENSURE an updated array $H$\quad\COMMENT{in range of the considered submatrix}
\STATE Initialize the auxiliary array $T$ of size $p\times p$
%\STATE $TEMP[i]:=CH[i+rb-1]$, $i=1,\dots,\quad\COMMENT{The last row of the part is unique for all tuples.}
%\STATE Initialize the row $CH[u]$ to $0$
\FOR {$i=1$ \TO$\theta$}
\STATE $H[r_0+\theta+i]:=H[r_0+\theta+i]+\mbox{lcs}(H[r])$ \quad\COMMENT{The transformation \textsc{Add0}}
\ENDFOR
\STATE $H[r]=(\displaystyle\sum_{i=0}^{p-1}H[r_0+1,i],\sum_{i=0}^{p-1}H[r_0+\theta+1,i],\ldots,\sum_{i=0}^{p-1}H[r_0+(p-1)\theta+1,i])$; \COMMENT{\textsc{LastRow}}
\FOR {$i=1$ \TO$\theta$}
\FOR {$j=0$ \TO$p-1$}
\STATE $T[j]:=H[r_0+j\cdot\theta+i]$
\ENDFOR
\STATE $H[r_0+i]:=T[0]+T[1]+\cdots+T[p-1]$
\FOR {$j=1$ \TO$q-1$}
\STATE $H[r_0+j\cdot\theta+i]:=T[0]+\sigma^j(T[1])+\cdots+\sigma^{j(p-1)}(T[p-1])$
\quad\COMMENT{\textsc{AllRows}}
\ENDFOR
\ENDFOR
\end{algorithmic}
\end{algorithm}

\item \textsc{AllRows}: This transformation consists of $p$ similar steps \textsc{AllRows[j]}, $j=0,1,\dots,p-1$, repeated $\theta_{k-1}$ times. To realize this transformation, we use an auxiliary $p\times p$ array $T$.  \textsc{AllRows[j]} acts on $T$ as follows:\\
   \textsc{AllRows[0]($T$)}  =$T$[0]+$T$[1] $+\cdots+$ $T$[$p-1$],\\
      \textsc{AllRows[j]($T$)}  =$T$[0]+$\sigma^j$($T$[1]) $+\cdots+$ $\sigma^{j(p-1)}$($T$[$p-1$]) for $j>0$.

    In the beginning $T$ consists of the first rows of all submatrices $M_{k-1}^{[\chi^{(j)}]}$, and in the $i$-th step $T$ consists of the $i$-th rows of these submatrices. Hence the transformation \textsc{AllRows} gives us
 $${\small
\left(\begin{array}{ccccccc}M_{k-1}^{[\chi^{(0)}]}&+&M_{k-1}^{[\chi^{(1)}]}&+\cdots+& M_{k-1}^{[\chi^{(p-1)}]}&+&\mbox{\boldmath $0$}^{[\chi^{(p)}]}\\
M_{k-1}^{[\chi^{(0)}]}&+&\sigma(M_{k-1}^{[\chi^{(1)}]})&+\cdots+&\sigma^{p-1}(M_{k-1}^{[\chi^{(p-1)}]})&+&\mbox{\boldmath $1$}^{[\chi^{(p)}]}\\
M_{k-1}^{[\chi^{(0)}]}&+&\sigma^2(M_{k-1}^{[\chi^{(1)}]})&+\cdots+&\sigma^{2(p-1)}(M_{k-1}^{[\chi^{(p-1)}]})&+&\mbox{\boldmath $2$}^{[\chi^{(p)}]}\\
&&&\ddots&&&\\
M_{k-1}^{[\chi^{(0)}]}&+&\sigma^{p-1}(M_{k-1}^{[\chi^{(1)}]})&+\cdots+&\sigma^{(p-1)^2}(M_{k-1}^{[\chi^{(q-1)}]})& +&\mbox{\boldmath $(p-1)$}^{[\chi^{(p)}]}\\
\mbox{\boldmath $0$}^{[\chi^{(0)}]}&+&\mbox{\boldmath $1$}^{[\chi^{(1)}]}&+\cdots+&\mbox{\boldmath $(p-1)$}^{[\chi^{(p-1)}]}&+&1^{[\chi^{(p)}]}
\end{array}\right)}=M_k^{[\chi]}.$$
\end{enumerate}

We keep the inactive rows unchanged in the computation of $M_k^{[\chi]}(l)$ from  $M_k^{[\chi]}(l-1)$, and apply the transformations described above to obtain $M_l^{[\chi^{\prime}]}(l)$ from $M_l^{[\chi^{\prime}]}(l-1)$ where $\chi^{\prime}$ is a suitable part of $\chi$.

\begin{example} Let $q=3$, $k=3$, and $\chi=(0,4,3,2,0,8,5,1,1,4,3,2,3)$. Applying Algorithms \ref{alCChS}--\ref{alBChS}  we have

%when $l=2$
\[\ctdiagram{
\ctv -175,145:{\mbox{\textsc{Add0}}}
\ctv -105,145:{\mbox{\textsc{LastRow}}}
\ctv -45,145:{\mbox{\textsc{AllRows}}}
\ctv -210,145:{M_3^{[\chi]}(1)}
\ctv -210,130:{0,0,0}
\ctv -210,120:{0,4,0}
\ctv -210,110:{0,3,0}
\ctv -210,100:{0,2,0}
\cthead\cten -190,100,-160,120:
\ctnohead\cten -225,95,-195,95:
\ctv -210,90:{0,0,0}
\ctv -210,80:{0,8,0}
\ctv -210,70:{0,5,0}
\ctv -210,60:{0,1,0}
\cthead\cten -190,60,-160,80:
\ctnohead\cten -225,55,-195,55:
\ctv -210,50:{0,1,0}
\ctv -210,40:{0,4,0}
\ctv -210,30:{0,3,0}
\ctv -210,20:{0,2,0}
\cthead\cten -190,20,-160,40:
\ctnohead\cten -225,15,-195,15:
\ctv -210,10:{0,3,0}
\ctv -140,130:{0,0,0}
\cthead\cten -120,130,-90,100:
\ctv -140,120:{2,4,0}
\cthead\cten -120,120,-90,100:
\ctv -140,110:{0,3,0}
\cthead\cten -120,110,-90,100:
\ctv -140,100:{0,2,0}
\ctnohead\cten -155,95,-125,95:
\ctv -140,90:{0,0,0}
\cthead\cten -120,90,-90,60:
\ctv -140,80:{1,8,0}
\cthead\cten -120,80,-90,60:
\ctv -140,70:{0,5,0}
\cthead\cten -120,70,-90,60:
\ctv -140,60:{0,1,0}
\ctnohead\cten -155,55,-125,55:
\ctv -140,50:{0,1,0}
\cthead\cten -120,50,-90,20:
\ctv -140,40:{2,4,0}
\cthead\cten -120,40,-90,20:
\ctv -140,30:{0,3,0}
\cthead\cten -120,30,-90,20:
\ctv -140,20:{0,2,0}
\ctnohead\cten -155,15,-125,15:
\ctv -140,10:{0,3,0}
\ctv -70,130:{0,0,0}
\cthead\cten -50,130,-20,130:
\cthead\cten -50,130,-20,120:
\cthead\cten -50,130,-20,110:
\ctv -70,120:{2,4,0}
\cthead\cten -50,120,-20,130:
\cthead\cten -50,120,-20,120:
\cthead\cten -50,120,-20,110:
\ctv -70,110:{0,3,0}
\cthead\cten -50,110,-20,130:
\cthead\cten -50,110,-20,120:
\cthead\cten -50,110,-20,110:
\ctv -70,100:{0,6,3}
\ctnohead\cten -85,95,-55,95:
\ctv -70,90:{0,0,0}
\cthead\cten -50,90,-20,90:
\cthead\cten -50,90,-20,80:
\cthead\cten -50,90,-20,70:
\ctv -70,80:{1,8,0}
\cthead\cten -50,80,-20,90:
\cthead\cten -50,80,-20,80:
\cthead\cten -50,80,-20,70:
\ctv -70,70:{0,5,0}
\cthead\cten -50,70,-20,90:
\cthead\cten -50,70,-20,80:
\cthead\cten -50,70,-20,70:
\ctv -70,60:{0,9,5}
\ctnohead\cten -85,55,-55,55:
\ctv -70,50:{0,1,0}
\cthead\cten -50,50,-20,50:
\cthead\cten -50,50,-20,40:
\cthead\cten -50,50,-20,30:
\ctv -70,40:{2,4,0}
\cthead\cten -50,40,-20,50:
\cthead\cten -50,40,-20,40:
\cthead\cten -50,40,-20,30:
\ctv -70,30:{0,3,0}
\cthead\cten -50,30,-20,50:
\cthead\cten -50,30,-20,40:
\cthead\cten -50,30,-20,30:
\ctv -70,20:{1,6,3}
\ctnohead\cten -85,15,-55,15:
%\ctv -105,145:{cor=1}
%\ctv -35,145:{cor=1}
\ctv 0,145:{M_3^{[\chi]}(2)}
\ctv 0,130:{2,7,0}
\ctv 0,120:{3,2,4}
\ctv 0,110:{4,0,5}
\ctv 0,100:{0,6,3}
\ctnohead\cten -15,95,15,95:
\ctv 0,90:{1,13,0}
\ctv 0,80:{5,1,8}
\ctv 0,70:{8,0,6}
\ctv 0,60:{0,9,5}
\ctnohead\cten -15,55,15,55:
\ctv 0,50:{2,8,0}
\ctv 0,40:{3,3,4}
\ctv 0,30:{4,1,5}
\ctv 0,20:{1,6,3}
\ctnohead\cten -15,15,15,15:
\ctv -70,10:{0,3,0}
\ctv 0,10:{0,3,0}
}\]

%when $l=3$
\[\ctdiagram{
\ctv -35,145:{\mbox{\textsc{Add0}}}
\ctv 35,145:{\mbox{\textsc{LastRow}}}
\ctv -70,145:{M_3^{[\chi]}(2)}
\ctv 105,145:{i=1}
%\ctv 35,145:{cor=1}
\ctv 175,145:{i=2}
\ctv 245,145:{i=3}
\ctv 315,145:{i=4}
\ctv 350,145:{M_3^{[\chi]}}
\ctv -70,130:{2,7,0}
\ctv -70,120:{3,2,4}
\ctv -70,110:{4,0,5}
\ctv -70,100:{0,6,3}
\ctnohead\cten -85,95,-55,95:
\ctv -70,90:{1,13,0}
\ctv -70,80:{5,1,8}
\ctv -70,70:{8,0,6}
\ctv -70,60:{0,9,5}
\ctnohead\cten -85,55,-55,55:
\ctv -70,50:{2,8,0}
\ctv -70,40:{3,3,4}
\ctv -70,30:{4,1,5}
\ctv -70,20:{1,6,3}
\ctnohead\cten -85,15,-55,15:
\ctv -70,10:{0,3,0}
\ctv 0,130:{2,7,0}
\ctv 0,120:{3,2,4}
\ctv 0,110:{4,0,5}
\ctv 0,100:{0,6,3}
\ctnohead\cten -15,95,15,95:
\ctv 0,90:{4,13,0}
\ctv 0,80:{8,1,8}
\ctv 0,70:{11,0,6}
\ctv 0,60:{3,9,5}
\ctnohead\cten -15,55,15,55:
\ctv 0,50:{2,8,0}
\ctv 0,40:{3,3,4}
\ctv 0,30:{4,1,5}
\ctv 0,20:{1,6,3}
\ctnohead\cten -15,15,15,15:
\ctv 0,10:{0,3,0}
\ctv 70,130:{2,7,0}
\ctv 70,120:{3,2,4}
\ctv 70,110:{4,0,5}
\ctv 70,100:{0,6,3}
\ctnohead\cten 55,95,85,95:
\ctv 70,90:{4,13,0}
\ctv 70,80:{8,1,8}
\ctv 70,70:{11,0,6}
\ctv 70,60:{3,9,5}
\ctnohead\cten 55,55,85,55:
\ctv 70,50:{2,8,0}
\ctv 70,40:{3,3,4}
\ctv 70,30:{4,1,5}
\ctv 70,20:{1,6,3}
\ctnohead\cten 55,15,85,15:
\ctv 70,10:{9,17,10}
\cthead\cten -50,10,-20,90:
\cthead\cten -50,10,-20,80:
\cthead\cten -50,10,-20,70:
\cthead\cten -50,10,-20,60:
\ctv 140,130:{8,28,0}
\ctv 140,120:{3,2,4}
\ctv 140,110:{4,0,5}
\ctv 140,100:{0,6,3}
\ctnohead\cten 125,95,155,95:
\ctv 140,90:{10,11,15}
\ctv 140,80:{8,1,8}
\ctv 140,70:{11,0,6}
\ctv 140,60:{3,9,5}
\ctnohead\cten 125,55,155,55:
\ctv 140,50:{15,9,12}
\ctv 140,40:{3,3,4}
\ctv 140,30:{4,1,5}
\ctv 140,20:{1,6,3}
\ctnohead\cten 125,15,155,15:
\ctv 140,10:{9,17,10}
\cthead\cten 90,130,120,130:
\cthead\cten 90,90,120,130:
\cthead\cten 90,50,120,130:
\cthead\cten 90,130,120,90:
\cthead\cten 90,90,120,90:
\cthead\cten 90,50,120,90:
\cthead\cten 90,130,120,50:
\cthead\cten 90,90,120,50:
\cthead\cten 90,50,120,50:
\cthead\cten 20,130,50,10:
\cthead\cten 20,90,50,10:
\cthead\cten 20,50,50,10:
\ctv 210,130:{8,28,0}
\ctv 210,120:{14,6,16}
\ctv 210,110:{4,0,5}
\ctv 210,100:{0,6,3}
\ctnohead\cten 195,95,225,95:
\ctv 210,90:{10,11,15}
\ctv 210,80:{14,14,8}
\ctv 210,70:{11,0,6}
\ctv 210,60:{3,9,5}
\ctnohead\cten 195,55,225,55:
\ctv 210,50:{15,9,12}
\ctv 210,40:{8,13,15}
\ctv 210,30:{4,1,5}
\ctv 210,20:{1,6,3}
\ctnohead\cten 195,15,225,15:
\ctv 210,10:{9,17,10}
\cthead\cten 160,120,190,120:
\cthead\cten 160,80,190,120:
\cthead\cten 160,40,190,120:
\cthead\cten 160,120,190,80:
\cthead\cten 160,80,190,80:
\cthead\cten 160,40,190,80:
\cthead\cten 160,120,190,40:
\cthead\cten 160,80,190,40:
\cthead\cten 160,40,190,40:
\ctv 280,130:{8,28,0}
\ctv 280,120:{14,6,16}
\ctv 280,110:{19,1,16}
\ctv 280,100:{0,6,3}
\ctnohead\cten 265,95,295,95:
\ctv 280,90:{10,11,15}
\ctv 280,80:{14,14,8}
\ctv 280,70:{11,16,9}
\ctv 280,60:{3,9,5}
\ctnohead\cten 265,55,295,55:
\ctv 280,50:{15,9,12}
\ctv 280,40:{8,13,15}
\ctv 280,30:{9,10,17}
\ctv 280,20:{1,6,3}
\ctnohead\cten 265,15,295,15:
\ctv 280,10:{9,17,10}
\cthead\cten 230,110,260,110:
\cthead\cten 230,70,260,110:
\cthead\cten 230,30,260,110:
\cthead\cten 230,110,260,70:
\cthead\cten 230,70,260,70:
\cthead\cten 230,30,260,70:
\cthead\cten 230,110,260,30:
\cthead\cten 230,70,260,30:
\cthead\cten 230,30,260,30:
\ctv 350,130:{8,28,0}
\ctv 350,120:{14,6,16}
\ctv 350,110:{19,1,16}
\ctv 350,100:{4,21,11}
\ctnohead\cten 335,95,365,95:
\ctv 350,90:{10,11,15}
\ctv 350,80:{14,14,8}
\ctv 350,70:{11,16,9}
\ctv 350,60:{11,12,13}
\ctnohead\cten 335,55,365,55:
\ctv 350,50:{15,9,12}
\ctv 350,40:{8,13,15}
\ctv 350,30:{9,10,17}
\ctv 350,20:{12,12,12}
\ctnohead\cten 335,15,365,15:
\ctv 350,10:{9,17,10}
\cthead\cten 300,100,330,100:
\cthead\cten 300,60,330,100:
\cthead\cten 300,20,330,100:
\cthead\cten 300,100,330,60:
\cthead\cten 300,60,330,60:
\cthead\cten 300,20,330,60:
\cthead\cten 300,100,330,20:
\cthead\cten 300,60,330,20:
\cthead\cten 300,20,330,20:
}\]
\end{example}

To explain more formally the main algorithm we introduce a matrix representation of the transform steps.
We put all rows of $M_k^{[\chi]}(l)$  in one row vector of length $p\theta(p,k)$ denoted by $\widehat{M}_k^{[\chi]}(l)$, $l=1,\dots,k$. We denote $\widehat{M}_k^{[\chi]}=\widehat{M}_k^{[\chi]}(k)$ and $\widehat{\chi}=\widehat{M}_k^{[\chi]}(1)$ for short.

In the following theorem, we use matrices of three types, namely:
\begin{itemize}
\item the $p\times p$ permutation matrix $P=\displaystyle\left(\begin{array}{cc}\bom{0}&1\\ I_{p-1}&\bom{0}^T\end{array}\right)$ which realizes the circular shift right operation. Then $P^0=I_p$, and $P^j$ realizes the circular shift right operation by $j$ positions;
\item the $p\times p$ matrices $E_j$, $j=0,1,\dots,p-1$, where the $j+1$-th row of $E_j$ is the all-ones vector, and the other rows of the matrix are zero vectors;
\item  the matrices $T_{k,l}$ for  $k,l\in\mathbb Z$, $2\leq l\leq k$. We define these matrices in the following way:
    \begin{itemize}
\item[1)] If $k=l=2$, then
\begin{equation}\label{eqT22}
T_{2,2}=\left(\begin{array}{llllll}I_p&I_p&I_p&\ldots&I_p&P^{-1}\\
I_p&P&P^2&\ldots&P^{p-1}&I_p\\
I_p&P^{2}&P^{4}&\ldots&P^{2(p-1)}&P\\
\multicolumn{6}{l}{\vdots}\\
I_p&P^{p-1}&P^{2(p-1)}&\ldots&P^{(p-1)^2}&P^{p-2}\\
E_0&E_1&E_2&\ldots&E_{p-1}&E_1
\end{array}\right)
\end{equation}
\item[2)] If $k>l$, then
\begin{equation}\label{eqTki}
T_{k,l}=\left(\begin{array}{cc}I_p\otimes T_{k-1,l}&\mbox{\boldmath $0$}\\ \mbox{\boldmath $0$}&I_p\end{array}\right)
\end{equation}
\item[3)] If $k=l>2$ then
\end{itemize}
\begin{equation}\label{eqTkk}
T_{k,k}=\left(\begin{array}{lllll}I_{\theta}\otimes I_p&I_{\theta}\otimes I_p&\ldots&I_{\theta}\otimes I_p&\mbox{\boldmath $1$} \otimes P^{-1}\\
I_{\theta}\otimes I_p&I_{\theta}\otimes P&\ldots&I_{\theta}\otimes P^{p-1}&\mbox{\boldmath $1$} \otimes I_p\\
I_{\theta}\otimes I_p&I_{\theta}\otimes P^{2}&\ldots&I_{\theta}\otimes P^{2(p-1)}&\mbox{\boldmath $1$} \otimes P\\
\vdots\\
I_{\theta}\otimes I_p&I_{\theta}\otimes P^{p-1}&\ldots&I_{\theta}\otimes P^{(p-1)^2}&\mbox{\boldmath $1$} \otimes P^{p-2}\\
E_0\;\;\mbox{\boldmath $0$} &E_1\;\;\mbox{\boldmath $0$} &\ldots&E_{p-1}\;\;\mbox{\boldmath $0$}& I_p
\end{array}\right)
\end{equation}
Here $\otimes$ means Kroneker product and $\theta=\theta(p,k-1)$.
\end{itemize}

\begin{theorem}\label{TM}
Let $\chi$ be a characteristic vector of an $[n,k;q]$-code. Then
%There exist transform matrices $T_{k,i}$, $i=2,\ldots,k$, defined above, such that
\begin{equation}
\left(\widehat{M}_k^{[\chi]}(l)\right)^{\rm T}=T_{k,l}\cdot\left(\widehat{M}_k^{[\chi]}(l-1)\right)^{\rm T}, \quad \;l=2,\dots,k,
\end{equation}
and
\begin{equation}
\left(\widehat{M}_k^{[\chi]}\right)^{\rm T}=T_{k,k}\cdot T_{k,k-1}\cdots T_{k,2}\cdot \widehat{\chi}^{\rm T}
\end{equation}
\end{theorem}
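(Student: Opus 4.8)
The plan is to reduce the second identity to the first and then prove the single-step relation $\left(\widehat{M}_k^{[\chi]}(l)\right)^{\rm T}=T_{k,l}\left(\widehat{M}_k^{[\chi]}(l-1)\right)^{\rm T}$ by induction on $k$. Once this single-step relation is available for every $l$ with $2\le l\le k$, the factorization $\left(\widehat{M}_k^{[\chi]}\right)^{\rm T}=T_{k,k}T_{k,k-1}\cdots T_{k,2}\,\widehat{\chi}^{\rm T}$ follows by iterating it from $l=k$ down to $l=2$, using $\widehat{M}_k^{[\chi]}(k)=\widehat{M}_k^{[\chi]}$ and $\widehat{M}_k^{[\chi]}(1)=\widehat{\chi}$. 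So the whole content is the one-step identity, which I would split according to whether $l<k$ or $l=k$. Throughout I would keep in mind that $\widehat{(\cdot)}$ is the row-major flattening, so that a vertical stacking of blocks corresponds to concatenation of the flattened blocks, and that the objects $M_k^{[\chi]}(l)$ and the recurrence \eqref{eqChMk} are purely formal in $\chi$, hence valid for an arbitrary integer vector (not only for genuine characteristic vectors); this is what lets me feed the parts $\chi^{(s)}$ into the inductive hypothesis.

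For the recursive case $l<k$ I would use Definition \ref{defPCChS}: both $M_k^{[\chi]}(l)$ and $M_k^{[\chi]}(l-1)$ are the vertical stack of the $p$ blocks $M_{k-1}^{[\chi^{(s)}]}(l)$, resp.\ $M_{k-1}^{[\chi^{(s)}]}(l-1)$, for $s=0,\dots,p-1$, followed by the single row $M_1^{[\chi^{(p)}]}$, which is the same in both matrices. Flattening turns this into the concatenation $\widehat{M}_{k-1}^{[\chi^{(0)}]}(l)\,|\,\cdots\,|\,\widehat{M}_{k-1}^{[\chi^{(p-1)}]}(l)\,|\,M_1^{[\chi^{(p)}]}$, and likewise for $l-1$. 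Applying the inductive hypothesis in dimension $k-1$ to each part $\chi^{(s)}$ gives $\left(\widehat{M}_{k-1}^{[\chi^{(s)}]}(l)\right)^{\rm T}=T_{k-1,l}\left(\widehat{M}_{k-1}^{[\chi^{(s)}]}(l-1)\right)^{\rm T}$, while the last block is fixed by $I_p$. The resulting block-diagonal operator, namely $p$ copies of $T_{k-1,l}$ followed by one trailing $I_p$, is precisely the matrix $T_{k,l}$ of \eqref{eqTki}, which closes this case.

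The case $l=k$ is the heart of the argument and where I expect the real work. Here $M_k^{[\chi]}(k-1)$ is the stack of the blocks $M_{k-1}^{[\chi^{(s)}]}(k-1)=M_{k-1}^{[\chi^{(s)}]}$ together with the last row $M_1^{[\chi^{(p)}]}=1^{[\chi^{(p)}]}$, and the target $M_k^{[\chi]}=M_k^{[\chi]}(k)$ is given block-wise by the recurrence \eqref{eqChMk}. I would verify that $T_{k,k}$ from \eqref{eqTkk} reproduces \eqref{eqChMk} row-group by row-group, using three elementary identities about the row-major flattening: (i) $(I_\theta\otimes P^{st})$ applied to $\widehat{M}_{k-1}^{[\chi^{(s)}]}$ is the flattening of $\sigma^{st}\!\left(M_{k-1}^{[\chi^{(s)}]}\right)$, since $I_\theta\otimes P^{st}$ applies the shift $P^{st}$ to each of the $\theta=\theta(p,k-1)$ rows; (ii) $(\mathbf{1}\otimes P^{t-1})\,1^{[\chi^{(p)}]}$ consists of $\theta$ stacked copies of $P^{t-1}1^{[\chi^{(p)}]}=\mathbf{t}^{[\chi^{(p)}]}$, which supplies the additive term $\mathbf{t}^{[\chi^{(p)}]}$ in the $t$-th row group (for $t=0$ this reads $P^{-1}1^{[\chi^{(p)}]}=\mathbf{0}^{[\chi^{(p)}]}$); and (iii) $(E_s\ \mathbf{0})\,\widehat{M}_{k-1}^{[\chi^{(s)}]}$ extracts the first row $(\omega_0,\dots,\omega_{p-1})$ of $M_{k-1}^{[\chi^{(s)}]}$, sums its entries, and places $\sum_u\omega_u=\sum_j\chi^{(s)}_j$ into coordinate $s+1$, i.e.\ it equals $\mathbf{s}^{[\chi^{(s)}]}$, producing the last row $\sum_{s}\mathbf{s}^{[\chi^{(s)}]}+1^{[\chi^{(p)}]}$ of \eqref{eqChMk}. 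Summing the $p+1$ column blocks in each row group then matches \eqref{eqChMk} term by term. The main obstacle is purely the bookkeeping: keeping the row-major indexing of $\widehat{M}$ aligned with the Kronecker factors $I_\theta\otimes P^{st}$, $\mathbf{1}\otimes P^{t-1}$, and $E_s\ \mathbf{0}$, and checking that the shift exponents reduce correctly modulo $p$.

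For the base case $k=2$ there is no $l<k$ subcase, and the computation above with $\theta=\theta(p,1)=1$ specializes $T_{2,2}$ of \eqref{eqTkk} to \eqref{eqT22} (the factors $I_\theta$ and $\mathbf{1}$ become the scalar $1$), so the $l=k=2$ verification is exactly the base case. This completes the induction and, by the telescoping remark of the first paragraph, yields the claimed factorization.
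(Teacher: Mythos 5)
Your proof is correct, and its skeleton --- induction on $k$, the case split $l<k$ versus $l=k$, and telescoping the one-step identities from $l=k$ down to $l=2$ --- is exactly the paper's. The differences lie in the two places where the paper is terse, and there your write-up actually adds value. First, for the case $l=k$ with $k>2$ the paper only says ``we have to apply \eqref{eqChMk}. It turns out that $\widehat{M}_k^{[\chi]}=T_{k,k}\cdot\widehat{M}_k^{[\chi]}(k-1)$''; your identities (i)--(iii), describing how the blocks $I_\theta\otimes P^{st}$, $\bom{1}\otimes P^{t-1}$ and $(E_s\;\bom{0})$ act on the row-major flattening, are precisely the omitted verification, and they check out (including your observation, which the paper never makes explicit, that the recurrence \eqref{eqChMk} is formal in $\chi$, so the inductive hypothesis may legitimately be applied to the parts $\chi^{(s)}$, which need not themselves be characteristic vectors of any code). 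Second, the base case is handled genuinely differently: the paper proves $k=2$ by factoring $T_{2,2}=T_{all}\cdot T_{last}\cdot T_{0}$ into three matrices realizing the algorithmic steps \textsc{Add0}, \textsc{LastRow} and \textsc{AllRows} --- which has the side benefit of explaining where $T_{2,2}$ comes from and mirroring Algorithm \ref{alBChS} --- whereas you specialize your general $l=k$ computation to $\theta=\theta(p,1)=1$. That route works, but with one caveat you should state: the specialization of \eqref{eqTkk} to $k=2$ is \emph{not} exactly \eqref{eqT22}, since its bottom-right block is $I_p$ while \eqref{eqT22} has $E_1$ there (the paper itself carries this inconsistency between its two definitions). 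The discrepancy is harmless because the last block of the input vector always has the form $1^{[\ast]}=(0,\ast,0,\dots,0)$, on which $E_1$ and $I_p$ act identically, i.e.\ $E_1(0,a,0,\dots,0)^{\rm T}=(0,a,0,\dots,0)^{\rm T}=I_p(0,a,0,\dots,0)^{\rm T}$; adding this one sentence closes the only loose end in your argument.
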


\begin{proof}
Let $k=2$. Then $\theta(p,2)=p+1$, $M_2$ is a $(p+1)\times (p+1)$ matrix, and the characteristic vector $\chi$ has length $p+1$, let $\chi=(\chi_0,\chi_1,\ldots,\chi_p)$.
%Following \eqref{eqChiSplit} we split $\chi$ into $q+1$ integers $\alpha_0,\alpha_1,\ldots,\alpha_q$.
To obtain $M_2^{[\chi]}(2)$, we have to apply the transformations \textsc{Add0}, \textsc{LastRow} and \textsc{AllRows} to $M_2^{[\chi]}(1)=\displaystyle\left(\begin{array}{c} M_1^{[\chi_0]}\\ \vdots\\ M_1^{[\chi_{p-1}]}\\ M_1^{[\chi_p]}
\end{array}\right)$
(see Definition~\ref{defPCChS}). These three transformations have matrix representations. The transform matrices in this case are square matrices of size $q(q+1)$. The three transformation matrices corresponding to \textsc{Add0}, \textsc{LastRow} and \textsc{AllRows}, respectively, are
\[
T_0=\left(
\begin{array}{ccccc}
I_p&\bom{0}&\cdots&\bom{0}&\bom{0}\\
\bom{0}&I_p&\cdots&\bom{0}&P^{-1}\\
&&\ddots&&\\
\bom{0}&\bom{0}&\cdots&I_p&\bom{0}\\
\bom{0}&\bom{0}&\cdots&\bom{0}&I_p
\end{array}
\right), \ \ T_{last}=\left(
\begin{array}{ccccc}
I_p&\bom{0}&\cdots&\bom{0}&\bom{0}\\
\bom{0}&I_p&\cdots&\bom{0}&\bom{0}\\
&&\ddots&&\\
\bom{0}&\bom{0}&\cdots&I_p&\bom{0}\\
E_0&E_1&\cdots&E_{q-1}&\bom{0}
\end{array}
\right),\]
\[ T_{all}=\left(
\begin{array}{cccccc}
I_p&I_p&I_p&\cdots&I_p&\bom{0}\\
I_p&P&P^2&\cdots&P^{p-1}&\bom{0}\\
I_p&P^2&P^4&\cdots&P^{2(p-1)}&\bom{0}\\
\vdots&\vdots&\vdots&\ddots&\vdots&\vdots\\
I_p&P^{p-1}&P^{2(p-1)}&\cdots&P^{(p-1)^2}&\bom{0}\\
\bom{0}&\bom{0}&\bom{0}&\cdots&\bom{0}&I_p
\end{array}
\right).
\]

The matrix $T_{2,2}$ is the product of the above matrices:
\[ T_{2,2}=T_{all}\cdot T_{last}\cdot T_0=\left(\begin{array}{cccccc}I_p&I_p&I_p&\ldots&I_p&P^{-1}\\
I_p&P&P^2&\ldots&P^{p-1}&I_p\\
I_p&P^2&P^4&\ldots&P^{2(p-1)}&P\\
\vdots&\vdots&\vdots&\ddots&\vdots&\vdots\\
I_p&P^{p-1}&P^{2(p-1)}&\ldots&P^{(p-1)^2}&P^{p-2}\\
E_0&E_1&E_2&\ldots&E_{q-1}&E_1
\end{array}\right).
\]

%%%%%%%%%%%%%%%%%%%%%%%%%%%%%%%%%%%%
 Thus $(\widehat{M}_2^{[\chi]})^{\rm T}=T_{2,2}\cdot(\widehat{M}_2^{[\chi]}(1))^{\rm T}$.

Let $k>2$. We assume that the theorem holds for every $k^{\prime}\in\mathbb Z$ where $2\leq k^{\prime}<k$. We split the characteristic vector $\chi\in{\mathbb Z}^{\theta(p,k)}$ into $p+1$ parts according \eqref{eqChiSplit}.

If $k>l$ then
\[M_k^{[\chi]}(l)=\left(\begin{array}{l}M_{k-1}^{[\chi^{(0)}]}(l)\\ M_{k-1}^{[\chi^{(1)}]}(l)\\ \ldots \\ M_{k-1}^{[\chi^{(p-1)}]}(l)\\ M_1^{[\chi^{(p)}]}\end{array}\right)\quad\mbox{and}
\quad M_k^{[\chi]}(l-1)=\left(\begin{array}{l}M_{k-1}^{[\chi^{(0)}]}(l-1)\\ M_{k-1}^{[\chi^{(1)}]}(l-1)\\ \ldots \\ M_{k-1}^{[\chi^{(p-1)}]}(l-1)\\ M_1^{[\chi^{(p)}]}\end{array}\right)\]
Following the induction hypothesis we have
$$(\widehat{M}_{k-1}^{[\chi^{(s)}]}(l))^{\rm T}=T_{k-1,l}\cdot(\widehat{M}_{k-1}^{[\chi^{(s)}]}(l-1))^{\rm T}, \ s=0,1,\ldots,p-1.$$
 So the assertion follows directly.

 If $k=l$ we have
$M_k^{[\chi]}(k)=M_k^{[\chi]}$ and
$M_k^{[\chi]}(k-1)=\left(\begin{array}{l}M_{k-1}^{[\chi^{(0)}]}\\ M_{k-1}^{[\chi^{(1)}]}\\ \ldots \\ M_{k-1}^{[\chi^{(p-1)}]}\\ M_1^{[\chi^{(p)}]}\end{array}\right)$
and we have to apply \eqref{eqChMk}. It turns out that
\[\widehat{M}_k^{[\chi]}=T_{k,k}\cdot\left(\widehat{M}_{k-1}^{[\chi^{(0)}]}| \widehat{M}_{k-1}^{[\chi^{(1)}]}| \ldots | \widehat{M}_{k-1}^{[\chi^{(p-1)}]}| \widehat{M}_1^{[\chi^{(p)}]}\right)^{\rm T}=T_{k,k}\cdot\widehat{M}_k^{[\chi]}(k-1).
\]

The main assertion follows directly.
\end{proof}

\section{Codes over composite fields}
\label{sec:composite}

Let $\F_q=\{0,\alpha_1=1,\alpha_2,\dots,\alpha_{q-1}\}$ be a finite field with $q$ elements, where
$q=p^m$, $p$ is prime and $m>1$. We need a basis
$\beta_1=1$, $\beta_2,\dots,\beta_m$ of $\F_q$ over the prime field $\F_p$.

%\begin{equation}\label{matrices}
%G'=(G | \alpha_2G | \cdots | \alpha_{q-1}G), \ \ \ \ \
%G^*=\left(\begin{array}{c}
%G'\\ \beta_2G'\\ \vdots\\ \beta_mG'
%\end{array}\right).
%\end{equation}
Let $\mbox{Tr}:\F_q\to\F_p$ denote the trace map, and $\mbox{Tr}(a)=(\mbox{Tr}(a_1),\ldots,\mbox{Tr}(a_n))\in\F_p^n$ for $a\in\F_q^n$. Let $C$ be a $[n,k,d]_q$ linear code with a generator matrix $G$, and $G'=(G | \alpha_2G | \cdots | \alpha_{q-1}G)$. The code $\mbox{Tr}(C)=\{\mbox{Tr}(c)\vert c\in C\}$ is the trace code of the linear $q$-qry code $C$. $\mbox{Tr}(C)$ is a linear code over the prime field $\F_p$ with the same length as $C$ but its dimension is less or equal to $mk$ \cite{Trace_Codes}. Therefore instead of $\mbox{Tr}(C)$, we consider the trace code of $C'$, where $C'$ is the code generated by the matrix $G'$ with parameters $[(q-1)n,k,(q-1)d]_q$.

%Then $G^*$ is an $mk\times (q-1)n$ matrix over $\F_q$. Define the map $\mathcal{T}$ which sends any $[n,k]_q$ code $C$ to a code $C_T$ of length $(q-1)n$ over the prime field $\F_p$ by
%$$\mathcal{T}(C)=C_T=\langle (\Tr(g_{ij}))_{mk\times (q-1)n}\rangle, \ \mbox{where} \ G^*=(g_{ij})_{mk\times (q-1)n}.$$
%If $G^*=(g_{ij})_{mk\times (q-1)n}$, the matrix $G_T$ is an $mk\times (q-1)n$ matrix over $\F_p$ whose elements are $Tr(g_{ij})$.

\begin{lemma}\label{dimension}
The dimension of the code $\mbox{Tr}(C')$ is equal to $mk$.
\end{lemma}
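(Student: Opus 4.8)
The plan is to exhibit the coordinatewise trace on $C'$ as an \emph{injective} $\F_p$-linear map, so that it preserves $\F_p$-dimension. Since $\dim_{\F_q}C=k$ and $[\F_q:\F_p]=m$, the code $C$ has $\F_p$-dimension $mk$, and the same holds for $C'$ because $c\mapsto(c\mid\alpha_2 c\mid\cdots\mid\alpha_{q-1}c)$ is an $\F_q$-linear (hence $\F_p$-linear) bijection of $C$ onto $C'$. So once injectivity of $\Tr$ on $C'$ is established, $\dim_{\F_p}\Tr(C')=\dim_{\F_p}C'=mk$ follows at once.

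Concretely, I would work with the $\F_p$-linear map
\[
\phi\colon C\to\F_p^{(q-1)n},\qquad \phi(c)=\Tr\bigl(c\mid\alpha_2 c\mid\cdots\mid\alpha_{q-1}c\bigr),
\]
whose image is exactly $\Tr(C')$. By the rank--nullity theorem, $\dim_{\F_p}\Tr(C')=mk-\dim_{\F_p}\ker\phi$, so the whole assertion reduces to proving $\ker\phi=\{0\}$.

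The key step is therefore the injectivity of $\phi$. Suppose $\phi(c)=0$ for some $c=(c_1,\dots,c_n)\in C$. Reading off the block attached to the scalar $\alpha_j$ gives $\Tr(\alpha_j c_i)=0$ for all $i=1,\dots,n$ and all $j=1,\dots,q-1$; combined with the trivial $\Tr(0\cdot c_i)=0$, this means $\Tr(\alpha c_i)=0$ for \emph{every} $\alpha\in\F_q$ and every $i$. The crucial, and only nonroutine, ingredient is that the trace of the (automatically separable) extension $\F_q/\F_p$ is a nonzero, i.e. surjective, $\F_p$-linear functional. Indeed, if $c_i\neq 0$, then $\alpha\mapsto\alpha c_i$ is a bijection of $\F_q$, so $\{\alpha c_i:\alpha\in\F_q\}=\F_q$, and $\Tr(\alpha c_i)=0$ for all $\alpha$ would force $\Tr$ to vanish identically on $\F_q$, a contradiction. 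Hence $c_i=0$ for each $i$, so $c=0$ and $\ker\phi=\{0\}$.

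Putting this together yields $\dim_{\F_p}\Tr(C')=mk$. I expect the single genuine subtlety to be this appeal to the non-degeneracy of the trace form (equivalently, surjectivity of $\Tr$); everything else is bookkeeping. It is also exactly the reason for replacing $C$ by $C'$: for $C$ alone the analogous map $c\mapsto\Tr(c)$ can have a nontrivial kernel, namely whenever all coordinates of a codeword lie in the hyperplane $\ker\Tr$, which is why $\dim_{\F_p}\Tr(C)$ may drop below $mk$, whereas forcing all nonzero scalar multiples into $C'$ removes this degeneracy.
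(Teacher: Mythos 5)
Your proof is correct, and its one nontrivial ingredient --- that a codeword $c\in C$ with $\Tr(\alpha c_i)=0$ for every $\alpha\in\F_q$ and every coordinate $i$ must be zero, because the trace of $\F_q/\F_p$ is not identically zero --- is exactly the heart of the paper's argument too. The packaging, however, is genuinely different. The paper fixes a basis $\beta_1,\dots,\beta_m$ of $\F_q$ over $\F_p$ and the rows $u_1,\dots,u_k$ of $G$, and shows that the $mk$ explicit vectors $\Tr(\beta_i v_j)$ are linearly independent: a relation $\sum_{i,j}\lambda_{ij}\Tr(\beta_i v_j)=0$ forces $w=\sum_{i,j}\lambda_{ij}\beta_i u_j=0$ by the same trace-nondegeneracy contradiction you use, and then $\lambda_{ij}=0$ follows from the independence of the $u_j$ over $\F_q$ and of the $\beta_i$ over $\F_p$. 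You avoid choosing any bases by noting $\dim_{\F_p}C=mk$ and applying rank--nullity to the $\F_p$-linear map $\phi$, whose image is $\Tr(C')$ by construction. Your route is slightly cleaner in two respects: the final bookkeeping step with $\lambda_{ij}$ disappears, and spanning comes for free (the paper asserts the $\Tr(\beta_i v_j)$ form a basis but only verifies linear independence; the spanning half, while routine, is left implicit). What the paper's version buys in exchange is an explicit generator matrix for $\Tr(C')$, namely the rows $\Tr(\beta_i v_j)$, which fits the algorithmic aims of the paper, whereas your argument is a pure dimension count.
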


\begin{proof}
If $u_1,\ldots,u_k$ and $v_1,\ldots,v_k$ are the rows of $G$ and $G'$, respectively, then $v_i=(u_i | \alpha_2u_i | \cdots | \alpha_{q-1}u_i)$. Let $\beta_1=1,\beta_2,\ldots,\beta_m$ be a basis of $\F_q$ over $\F_p$. We prove that $\mbox{Tr}(\beta_iv_j)$, $i=1,\ldots,m$, $j=1,\ldots,k$, is a basis of the code $\mbox{Tr}(C')$.

Suppose that
$\displaystyle\sum_{i=1}^m\sum_{j=1}^k\lambda_{ij}\mbox{Tr}(\beta_iv_j)=0, \ \lambda_{ij}\in\F_p.$
It turns out that
$$\sum_{i=1}^m\sum_{j=1}^k\lambda_{ij}\mbox{Tr}(\alpha_s\beta_iu_j)= \mbox{Tr}(\sum_{i=1}^m\sum_{j=1}^k\lambda_{ij}\alpha_s\beta_iu_j)=0, \ \forall s\in\{1,2,\ldots,q-1\}.$$
Hence
$$\mbox{Tr}(\alpha_s\sum_{i=1}^m\sum_{j=1}^k\lambda_{ij}\beta_iu_j)=0, \ \forall s\in\{1,2,\ldots,q-1\}.$$

If $\displaystyle\sum_{i=1}^m\sum_{j=1}^k\lambda_{ij}\beta_iu_j\neq 0$ then $\alpha_s(\displaystyle\sum_{i=1}^m\sum_{j=1}^k\lambda_{ij}\beta_iu_j)$, $s=1,2,\dots,q-1$, are all nonzero elements of the field and therefore some of their traces must be nonzero elements of $\F_p$ - a contradiction. This proves that
$\displaystyle\sum_{i=1}^m\sum_{j=1}^k\lambda_{ij}\beta_iu_j= 0$.
Since $u_1,\dots,u_k$ is a basis of the code $C$ then
$$\sum_{i=1}^m
\lambda_{ij}\beta_i= 0, \ \forall j=1,2,\dots,k.$$
But $\beta_1,\beta_2,\dots,\beta_m$ is a basis of $\F_q$ over $\F_p$, so $\lambda_{ij}=0$ for all $i=1,\ldots,m$, $j=1,\ldots,k$. Hence the vectors $\mbox{Tr}(\beta_iv_j)$ are linearly independent and the dimension of $\mbox{Tr}(C')$ is $mk$.
\end{proof}

\begin{corollary}
The codes $C$ and $\mbox{Tr}(C')$ have the same number of codewords, namely $q^k=p^{mk}$.
\end{corollary}

Let $c=(c_1,\ldots,c_n)\in C$ and $c_T=\mbox{Tr}(c\vert \alpha_2c\vert \cdots\vert\alpha_{q-1}c)$. If $c_i\neq 0$ then $\{ c_i,\alpha_2c_i,\dots,\alpha_{q-1}c_i\}=\F_q^*$. Hence $p^m-p^{m-1}$ of the elements in the set $\{ Tr(c_i),Tr(\alpha_2c_i),\dots,Tr(\alpha_{q-1}c_i)\}$ are nonzeros. Hence
$$\wt(c_T)=(p^m-p^{m-1})\wt(c).$$

It turns out that the minimum weight of $\mbox{Tr}(C')$ is $$d_T=(p^m-p^{m-1})d=\frac{q(p-1)}{p}d.$$
So we obtain the following proposition

\begin{proposition}\label{prop1}
If $C$ is an $[n,k,d]$ linear code over $\F_q$, $q=p^m$, $p$ - prime, $m>1$, then $\mbox{Tr}(C')$ is a $[(q-1)n,mk,\frac{q(p-1)}{p}d]_p$ code. Moreover, if $W(y)=\sum_{i=1}^n A_iy^i$ is the weight enumerator of the code $C$, then the weight enumerator of $\mbox{Tr}(C')$ is
$$W_T(y)=\sum_{i=1}^n A_iy^{q(p-1)i/p}.$$
\end{proposition}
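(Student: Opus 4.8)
The plan is to establish Proposition~\ref{prop1} by combining the dimension count from Lemma~\ref{dimension} with the weight-preservation identity computed just above the statement. First I would record what has already been proved: Lemma~\ref{dimension} gives that $\mbox{Tr}(C')$ has dimension exactly $mk$, and the paragraph preceding the proposition shows that for any $c=(c_1,\ldots,c_n)\in C$, the corresponding codeword $c_T=\mbox{Tr}(c\vert\alpha_2 c\vert\cdots\vert\alpha_{q-1}c)\in\mbox{Tr}(C')$ satisfies $\wt(c_T)=(p^m-p^{m-1})\wt(c)=\frac{q(p-1)}{p}\wt(c)$. The length $(q-1)n$ and the minimum distance $\frac{q(p-1)}{p}d$ are then immediate from applying this weight formula to a minimum-weight codeword of $C$, so the parameter claim $[(q-1)n,mk,\frac{q(p-1)}{p}d]_p$ follows at once.

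The substance of the proposition is the weight enumerator formula, and the key step is to show that the map $c\mapsto c_T$ sets up a weight-compatible bijection between $C$ and $\mbox{Tr}(C')$. I would argue that this map is $\F_p$-linear (since $\mbox{Tr}$ is $\F_p$-linear and the expansion $c\mapsto(c\vert\alpha_2 c\vert\cdots\vert\alpha_{q-1}c)$ is additive) and surjective onto $\mbox{Tr}(C')$ by definition of the trace code of $C'$. By the Corollary immediately preceding, $C$ and $\mbox{Tr}(C')$ have the same cardinality $q^k=p^{mk}$, so a surjective map between finite sets of equal size is a bijection; hence $c\mapsto c_T$ is a bijection $C\to\mbox{Tr}(C')$. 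Under this bijection the weight identity $\wt(c_T)=\frac{q(p-1)}{p}\wt(c)$ transports each codeword of $C$ of weight $i$ to a codeword of $\mbox{Tr}(C')$ of weight $\frac{q(p-1)}{p}i$, and conversely every codeword of $\mbox{Tr}(C')$ arises this way. Therefore the number $A_i$ of weight-$i$ codewords in $C$ equals the number of weight-$\frac{q(p-1)}{p}i$ codewords in $\mbox{Tr}(C')$, which gives
\begin{equation*}
W_T(y)=\sum_{i=1}^n A_i\,y^{q(p-1)i/p},
\end{equation*}
matching the stated enumerator.

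The point that needs the most care is the bijectivity, since a priori distinct codewords $c,c'\in C$ could produce the same trace codeword $c_T=c_T'$; the weight formula alone does not rule this out. The clean way around this is the counting argument above: rather than proving injectivity directly, I would invoke equality of cardinalities (from the Corollary) together with surjectivity, so that no separate injectivity computation is required. An alternative, if one prefers a direct route, is to observe that $c\mapsto c_T$ being $\F_p$-linear means its kernel is the set of $c\in C$ with $\mbox{Tr}(\alpha_s c)=0$ for all $s$, and the same nondegeneracy of the trace form used in the proof of Lemma~\ref{dimension}—namely that $\alpha\mapsto\mbox{Tr}(\alpha x)$ is nonzero for every nonzero $x$—forces $c=0$; either argument closes the gap, and I expect the cardinality argument to be the more economical one to write out.
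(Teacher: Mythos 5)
Your proposal is correct and follows essentially the same route as the paper: the weight identity $\wt(c_T)=\frac{q(p-1)}{p}\wt(c)$ established just before the statement, Lemma~\ref{dimension} for the dimension, and the equal-cardinality Corollary to transfer the weight distribution from $C$ to $\mbox{Tr}(C')$. The paper in fact states the proposition with no formal proof, leaving the bijectivity of $c\mapsto c_T$ implicit; your surjectivity-plus-equal-cardinality argument (or the kernel argument via nondegeneracy of the trace, or simply noting that the weight formula itself forces the $\F_p$-linear map to have trivial kernel) supplies exactly that missing justification, so your write-up is, if anything, more complete than the paper's.
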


%Let focus on the weight distribution of both codes.
%
%\begin{proposition}\label{prop2}
%If the codewords $u$ and $v$ in $C$ are proportional then $u_T$ and $v_T$ have the same weight.
%\end{lemma}
%
%\begin{proposition}
%Let $v=\mu v$, $\mu\in\F_q$. Since the proportional nonzero vectors have the same weight, we have
%$$\wt(v_T)=(p^m-p^{m-1})\wt(v)=(p^m-p^{m-1})\wt(u)=\wt(u_T).$$
%\end{proof}
%
%Both Propositions \ref{prop1} and \ref{prop2} give us that the coefficients in $A_i$ are multiples of $q-1$ and therefore a maximal set of nonproportional codewords of $C$ is enough for computing the weight distribution.

Proposition \ref{prop1} shows that we can use the weight distribution of the code $\mbox{Tr}(C')$ over the prime field $\F_p$ to obtain the weight distribution of the $q$-ary linear code $C$. That's why our algorithm is implemented only for codes over a prime field.

\section{Complexity of the algorithms and experimental results}\label{sec_res}

We consider codes over a prime field $\F_p$ with length $n<2^{32}$ and number of codewords $p^k<2^{64}$, so we need 32-bit integers for the weights of codewords and 64-bit integers for the number of codewords with a given weight.
Therefore we use only basic integer types and operations with them.
To calculate the weight distribution of a linear code, we use two arrays with 32-bit integers, namely $H$ of size $\theta(p,k)\times p$ and $T$ of size $p\times p$. The total memory we need (without a memory for the generator matrix) is $p\theta(p,k)+p^2+2n+C$ 32-bit units, where we add $2n$,  because the weight distribution is a vector of length $n$ consisting of 64-bit integers, and a constant $C$ for the other variables in the algorithms. If we use the reduced weighted distribution, we will have one column less in the array $H$, so we have to subtract $\theta(p,k)$ from the above expression.
%The characteristic vector of the code and the weight distribution are integer vectors of lengths $\theta(q,k)$ and $n$, respectively.

%To compute the reduced weighted distribution we use a $\theta(q,k)\times (q-1)$ array $M_k^{[\chi]_r}(l)$, $k$-elements array $a$ for the inactive rows, $(q+1)\times (q-1)$ help array for calculating a new tuple of rows and a few variables.
%In a fixed step we have $\theta(q,k-l)$ inactive rows and $\theta(q,k)-\theta(q,k-l)=q^{k-l}\theta(q,l)$ rows to be updated. We give more calculations on the complexity in Section \ref{sec_res}.
%To compute a new row we have no more $(q-1)(2q)=O(q^2)$ operations for an intermediate row and $q\theta(q,l-1)$ operations for the last row of a tuple. So the summary complexity of the considered algorithm is $O(k\cdot q^{k+1})$.

The main procedure computes the array $H$ in $k-1$ steps. In the $l$-th step of the procedure, there are  $a_l$ active and $b_l$ inactive rows, where $a_l+b_l=\theta(p,k)$, $a_l=p^{k-l}\theta(p,l)$, $b_l=\theta(p,k-l)$, $l=2,3,\dots,k$. The inactive rows remain unchanged. Any element in an active row is calculated in Algorithm \ref{alBChS} as a sum with $p$ summands. There are $a_l$ active rows of length $p$ and so we use $a_lp^2$ operations for the calculations in this step. Actually, this is the number of calculations of the transformations \textsc{LastRow} and \textsc{AllRows}. The transformation \textsc{Add0} uses $p^{k-l}\theta(p,l-1)\le \theta(p,k-1)$ operations, and therefore the complexity of the $l$-th step (the body of the for-loop) is
\begin{align*}
  a_lp^2+p^{k-l}\theta(p,l-1) & =p^{k+2-l}\frac{p^l-1}{p-1}+p^{k-l}\frac{p^{l-1}-1}{p-1} \\
   & =\frac{p^{k+2}-p^{k+2-l}+p^{k-1}-p^{k-l}}{p-1}.
\end{align*}
Hence the complexity of Algorithm \ref{alCChS} is
$$\sum_{l=2}^k \frac{p^{k+2}-p^{k+2-l}+p^{k-1}-p^{k-l}}{p-1}=(k-1)\frac{p^{k+2}+p^{k-1}}{p-1}-\frac{(p^2+1)(p^{k-1}-1)}{(p-1)^2}.$$
It turns out that for a fixed $p$ the complexity of the algorithm is $O(kp^k)$. When accounting for both $k$ and $p$, in terms of arithmetic
operations the running time can be written as $O(kp^{k+1})$.

\begin{remark}
We compare our algorithm with Algorithm 9.8 (Walsh transform over a prime finite field $\F_p$) in \cite{Joux}.  According to Joux, the complexity of his algorithm when $p$ varies is $O(kp^{k+2}$).
%The advantage of our algorithm is that we use only nonproportional vectors.
\end{remark}

%Since Algorithm \ref{alCChS} is presented in iterative form, its complexity analysis is
%not difficult. We use a $\theta(q,k)\times q$ array $H$, a $k$-elements array $a$, a $q\times q$ help array $TEMP$ and a few variables. We have to provide $k-1$ steps. In a fixed step we have $\theta(q,k-l)$ inactive rows and $\theta(q,k)-\theta(q,k-l)=q^{k-l}\theta(q,l)$ rows to be updated. To compute a new row we have $O(q^2)$ operations. So the summary complexity of the considered algorithm is $O(k. \theta(q,k). q^2)=O(k. q^{k+1})$ if $q$ varies and $O(k.q^k)$ for a fixed $q$.

We implement the presented approach, based on Algorithms~1--3, in a C/C++ program. To compare the efficiency,  we use C implementation of an algorithm, presented in  \cite{BouyuklievBakoev}, with the same efficiency as the Gray code algorithms.  As a development environment for both algorithms we use \textsc{MS Visual Studio 2012}.  All examples are executed on  (\textsc{Intel Core  i7-3770k 3.50 GHz processor}) in Active solution configuration --- Release, and Active solution platform --- \textsc{X64}.

Input data are randomly generated linear codes with lengths 30, 300, 3000, 30000 and different dimensions over finite fields with 2, 3, 4, 5, and 7 elements.
All the results with the obtained execution times are given in seconds (Table 1). Any column consists of two subcolumns. The first subcolumn (named 'NEW') contains the results obtained by the new algorithm (described in this paper), and the second one gives the execution time for the same code but using the algorithm from \cite{BouyuklievBakoev}, implemented in the package \textsc{Q-Extension}. The runtime shown in Table 1 is the full execution time to compute the weight distribution starting with a generator matrix of a code with the given parameters.

In Table 2 we present results for the same parameters as in Table 1 but obtained  using Magma V2.25-2 by online Magma Calculator run in a virtual machine on an Intel Xeon Processor E3-1220, 3.10 GHz.

%on a Linux system with processor Intel(R) Core(TM) i5-4570 CPU @ 3.20GHz (averaged over 5 runs).

The results given in the tables show that the  presented approach is faster for codes with large length. The execution time for computing the characteristic vector is  negligible.

%In conclusion, we can say that this approach is  very fast, easy for  parallelization, but it needs a lot of memory.

%\begin{table}
%% table caption is above the table
%\caption{Experimental results}
%\label{table-1}       % Give a unique label
%% For LaTeX tables use
%\begin{tabular}{c|ccccccc}
%\hline\noalign{\smallskip}
%$n\setminus k$ & 2 & 3&4&5&6&7&8  \\
%\noalign{\smallskip}\hline\noalign{\smallskip}
%                                                                                                                                                                                                                                                                                                                                                                                                                                                                                                                        \noalign{\smallskip}\hline\noalign{\smallskip}
%total &  45& 564  &14219  & 97897 & 46776 & 38507 & 27299  \\                                                                                                                                                                                                                                                                                                                                                                                                                                                                                                                                                \noalign{\smallskip}\hline
%\end{tabular}
%\end{table}

\begin{table}
% table caption is above the table
\caption{Experimental results}
\label{table-1}       % Give a unique label
% For LaTeX tables use

\begin{tabular}{c|c||rr|rr|rr|rr}
\hline\noalign{\smallskip}
 &&  $n=$& 30 & $n=$&300   & $n=$&3000    &$n=$&30000 \\
\noalign{\smallskip}\hline\noalign{\smallskip}
$q$&$k$&NEW  &OLD & NEW   &  OLD      &NEW   &  OLD      & NEW  &OLD \\
\noalign{\smallskip}\hline\noalign{\smallskip}
2 & 23 & 0.190 &0.084 & 0.132 & 0.139  & 0.131  & 0.700   & 0.133 &  6.064  \\
2 & 24 & 0.214 &0.168 & 0.268 & 0.278  & 0.266  & 1.412   & 0.271  & 11.665\\
2 & 25 & 0.552 &0.337 & 0.552 & 0.549  & 0.552  & 2.816   & 0.552  &23.649 \\
2 & 26 & 1.146 &0.637 & 1.144 & 1.107  & 1.148   & 5.595  & 1.150  & 47.001 \\
\hline
3 & 13 &0.039 &0.015 & 0.101   & 0.032     & 0.101  & 0.190       &  0.103 & 1.690 \\
3 & 14 &0.292 &0.048 & 0.295   & 0.099      &0.294   & 0.575       & 0.295  &5.070  \\
3 & 15 &0.968 &0.145 & 0.949    & 0.291      & 0.955  & 1.716     & 0.957  & 15.122 \\
3 & 16 &3.012 &0.439 & 3.035   &  0.881     &  3.100  & 5.249       & 3.111  & 46.695 \\
\hline
4 &10  &0.016 &0.007&  0.016   & 0.013 & 0.016  & 0.089 & 0.016  &0.747 \\
4 &11  &0.064 &0.025 & 0.064   & 0.052 & 0.065  & 0.344 & 0.066  &2.997  \\
4 &12  &0.261 &0.109 & 0.263   & 0.208 & 0.263  & 1.325 & 0.263  &11.637  \\
4 & 13 &1.444 &0.422 & 1.444   & 0.857 & 1.445  & 5.359 & 1.446  & 46.976 \\
\hline
5 & 8 &0.130 &0.013 & 0.140   & 0.133   & 0.140 &    1.225    & 0.150  &12.228  \\
5 & 9 &0.063 &0.068 & 0.063   &  0.614     &0.065   & 6.135       & 0.067  & 60.834 \\
5 & 10 &0.335 &0.309 & 0.335   & 3.062      & 0.337  & 30.318       & 0.343   & 295.691  \\
5 & 11 &1.847 &1.517 & 1.842   & 15.197      &  1.841 &   151.716     & 1.843  & 1514.924 \\
\hline
7 &6  &0.004 &0.002 & 0.004  & 0.020      & 0.004  & 0.202      & 0.008  & 2.049 \\
7 &7  &0.027 &0.013 & 0.022  &  0.166     & 0.021  & 1.469       & 0.026  &14.554  \\
7 &8  &0.170 &0.107 & 0.174   & 1.037      & 0.172  & 10.084       & 0.181  & 101.280 \\
7 &9  &1.351 &0.743 & 1.363   & 7.105     & 1.379  &  70.795      & 1.397  &705.218  \\

\noalign{\smallskip}\hline
\end{tabular}
\end{table}

\begin{table}
\caption{Experimental results using Magma V2.25-2 by online Magma Calculator}
\label{table-Magma}
%\footnotesize
%\begin{center}
%Table 1
%\caption{RES.\label{SO3}}\vspace*{0.2in}
%{\def\arraystretch{1}%\small
\begin{tabular}{r|r||r|r|r|r}
%\hline $$&& & n= 30 & n=300 &   n=3000 &  n=30000\\
\hline\noalign{\smallskip}
% &  $d$ & \# &  $d$ & \# & $d$ & \# & $d$ & \#\\
q&k& n= 30   &  n= 300     & n= 3000        &  n= 30000  \\\hline\hline
2 & 23  &0.000 & 0.130 & 1.140  & 11.340 \\
2 & 24  &0.000 & 0.260 & 2.300  & 22.680\\
2 & 25  &0.000 & 0.520 & 4.560 & 45.380\\
2 & 26  &0.000 & 1.030 & 9.150 & 90.680 \\
\hline
3 & 13  &0.010 & 0.040 & 0.180  & 1.730 \\
3 & 14   &0.010 & 0.080 & 0.540 & 5.180  \\
3 & 15  &0.060 & 0.240 &  1.580 & 15.530 \\
3 & 16  &0.020 & 0.630 & 4.810 & 46.580 \\
\hline
4 &10   &0.010 & 0.040 & 0.090  &0.730 \\
4 &11   &0.010 & 0.070 & 0.320 & 2.940  \\
4 &12   &0.060 & 0.180 & 1.250 & 11.760  \\
4 & 13  &0.230 & 0.680 & 5.000 & 47.030\\
\hline
5 & 8    &0.00 & 0.070 & 0.040  & 0.330 \\
5 & 9    &0.00 & 0.090 & 0.170 &   1.670 \\
5 & 10  &0.03 & 0.170 & 0.860 &   8.350  \\
5 & 11  &0.160 & 0.600 & 4.270 & 41.770 \\
\hline
7 &6  &0.00 & 0.00 &   0.010 &   0.080 \\
7 &7   &0.00 & 0.010 &   0.060 &   0.530 \\
7 &8   &0.010 & 0.140 &   0.370 &   3.600 \\
7 &9   &0.100 & 0.410 &   2.630 & 25.300 \\
\noalign{\smallskip}\hline
\end{tabular}
\end{table}

\section*{Acknowledgements}

This research was supported by Grant DN 02/2/13.12.2016 of the Bulgarian National Science Fund. The third author was partially supported by JSPS KAKENHI Grant Number JP16K05256.

We thank Geoff Bailey, Computational Algebra Group, University of Sydney, for the provided information about the processor used by Magma Calculator.

%We are greatly indebted to the unknown referees for their careful reading of the manuscript and for their
%useful suggestions.

% BibTeX users please use one of
%\bibliographystyle{spbasic}      % basic style, author-year citations
%\bibliographystyle{spmpsci}      % mathematics and physical sciences
%\bibliographystyle{spphys}       % APS-like style for physics
%\bibliography{}   % name your BibTeX data base

% Non-BibTeX users please use

\end{document}